




\documentclass[sigconf]{aamas} 
\settopmatter{printacmref=false} 
\renewcommand\footnotetextcopyrightpermission[1]{} 
\pagestyle{plain} 


\usepackage{balance} 

\usepackage{booktabs} 
\usepackage{tikz} 

\usepackage{algorithm}
\usepackage{algpseudocode}
\usepackage{xcolor}
\usepackage{booktabs}
\usepackage{enumitem}
\usepackage{url}
\usepackage{multicol}
\usepackage{multirow}
\setcounter{secnumdepth}{3}
\setcounter{tocdepth}{3}
\usepackage{arydshln}
\usepackage{adjustbox}
\usepackage{soul}
\usepackage{bbm}

\renewcommand{\paragraph}[1]{\smallskip\noindent\textbf{#1}}
\newenvironment{rcases}
  {\left.\begin{aligned}}
  {\end{aligned}\right\rbrace}
\usepackage{pifont}

\DeclareMathOperator*{\argmax}{arg\,max}

\newenvironment{proofsketch}{%
  \begin{proof}[Proof Sketch]%
}{%
  \end{proof}%
}



\setcopyright{ifaamas}
\acmConference[AAMAS '24]{Proc.\@ of the 23rd International Conference
on Autonomous Agents and Multiagent Systems (AAMAS 2024)}{May 6 -- 10, 2024}
{Auckland, New Zealand}{N.~Alechina, V.~Dignum, M.~Dastani, J.S.~Sichman (eds.)}
\copyrightyear{2024}
\acmYear{2024}
\acmDOI{}
\acmPrice{}
\acmISBN{}




\usepackage[absolute,overlay]{textpos}
\settopmatter{printacmref=false}


\title[Designing Redistribution Mechanisms for Reducing Transaction Fees in Blockchains]{Designing Redistribution Mechanisms for Reducing Transaction Fees in Blockchains
}


\author{Sankarshan Damle}
\authornote{Equal Contribution.}
\affiliation{
  \institution{IIIT, Hyderabad}
  \city{Hyderbad}
  \country{India}}
\email{sankarshan.damle@research.iiit.ac.in}

\author{Manisha Padala}
\authornotemark[1]
\affiliation{
  \institution{IISc, Bangalore}
  \city{Bangalore}
  \country{India}}
\email{manishap@iisc.ac.in}

\author{Sujit Gujar}
\affiliation{
  \institution{IIIT, Hyderabad}
  \city{Hyderbad}
  \country{India}}
\email{sujit.gujar@iiit.ac.in}


\begin{abstract}
Blockchains deploy Transaction Fee Mechanisms (TFMs) to determine which user transactions to include in blocks and determine their payments (i.e., transaction fees). Increasing demand and scarce block resources have led to high user transaction fees. As these blockchains are a public resource, it may be preferable to reduce these transaction fees. To this end, we introduce Transaction Fee Redistribution Mechanisms (TFRMs) -- redistributing VCG payments collected from such TFM as rebates to minimize transaction fees. Classic redistribution mechanisms (RMs) achieve this while ensuring Allocative Efficiency (AE) and User Incentive Compatibility (UIC). Our first result shows the non-triviality of applying RM in TFMs. More concretely, we prove that it is impossible to reduce transaction fees when (i) transactions that are not confirmed do not receive rebates and (ii) the miner can strategically manipulate the mechanism. Driven by this, we propose \emph{Robust} TFRM (\textsf{R-TFRM}): a mechanism that compromises on an honest miner's individual rationality to guarantee strictly positive rebates to the users. We then introduce \emph{robust} and \emph{rational} TFRM (\textsf{R}$^2$\textsf{-TFRM}) that uses trusted on-chain randomness that additionally guarantees miner's individual rationality (in expectation) and strictly positive rebates. Our results show that TFRMs provide a promising new direction for reducing transaction fees in public blockchains.
\end{abstract}



\keywords{Transaction Fee Mechanism Design, Redistribution Mechanism}

         
\newcommand{\BibTeX}{\rm B\kern-.05em{\sc i\kern-.025em b}\kern-.08em\TeX}

\newtheorem{definition}{Definition}

\newtheorem{theorem}{Theorem}
\newtheorem*{Lemma}{Lemma}
\newtheorem{claim}{Claim}
\newtheorem{example}{\textbf{Example}}
\newtheorem*{theorm}{Theorem}
\newtheorem*{Claim}{Claim}


\newcommand{\rtfrm}{\textsf{R-TFRM}}

\newcommand{\rrtfrm}{\textsf{R}$^2$\textsf{-TFRM}}


\begin{document}


\pagestyle{fancy}
\fancyhead{}



\maketitle 

\begin{textblock}{15}(0.35,1)
\centering
\noindent\small In the Proceedings of the 23\textsuperscript{rd} International Conference on Autonomous Agents and Multiagent Systems (AAMAS), 2024.
\end{textblock}

\section{Introduction\label{sec::intro}}

\emph{Public} blockchains have achieved mainstream prominence with Bitcoin~\cite{nakamoto2008bitcoin} and Ethereum~\cite{buterin2014next} processing $>1M$ transactions daily \cite{ychartsBitcoin,ychartsEth}. Most commonly, public blockchains comprise a cryptographically linked series of blocks. Each block may consist of several individual transactions. \emph{Miners}, tasked with block creation, add a subset of transactions from the set of outstanding transactions (referred to as \emph{mempool}). To incentivize miners to add their transaction to the block, transaction creators (henceforth \emph{users}) include a \emph{fee} as a commission. The fee absorbs the users' \emph{valuation} for their transaction being added to the block. \citet{roughgarden21} proposes \emph{transaction fee mechanisms} (TFMs) to study the strategic interaction between the miner and the users.

\paragraph{Transaction Fee Mechanism (TFM).} TFMs resemble a classic auction setting. Users place a bid to include their transactions in the block, and the miner mimics an auctioneer to select the subset, which maximizes its revenue. E.g., Bitcoin's TFM resembles a first-price auction, where the block's miner greedily adds the transactions with the highest bids. Unfortunately, an increasing demand, cryptocurrency's market volatility, and supply-demand economics have led to users' over-paying~\cite{basu2019stablefees}. E.g., \citet{messias2020blockchain} show that $30\%$ of Bitcoin fees are two orders of magnitude more than recommended.

Considering public blockchains as a shared resource, it's desirable not to impose charges for transaction confirmation. However, given the infeasibility of confirming every transaction due to resource constraints, one may prefer only to confirm transactions with higher value (pertaining to their importance to users). The absence of transaction fees could lead users to misrepresent the value of their transactions in order to secure confirmation. Therefore, this paper aims to design TFMs that minimize transaction fees while upholding other incentive-related properties. 

Clearly, the minimization of transaction fees is at odds with the miner's objective of maximizing revenue. Thus, the task of designing TFMs to minimize fees is more intricate than in the classical auction setting, primarily because, in TFMs, miners have complete control over the transactions they include in their blocks~\cite{roughgarden21}.



\noindent\textbf{Our Goal.}  We aim to design a TFM that satisfies certain game theoretic properties like (i) Allocative Efficiency (AE): confirmed transactions maximize the overall valuation,  (ii) User Incentive Compatibility (UIC): users bid their true valuation, and Individual Rationality (IR): users receive non-negative payoff. At the same time, the TFM must actively reduce transaction fees for users, thereby enhancing the blockchain's appeal. Unfortunately, from the famous Green-Laffont Impossibility Theorem~\cite{LaffontM79}, we know that it is impossible to design a TFM that is both AE and UIC and which guarantees \emph{zero net transaction fees} -- in mechanism design commonly referred to as \emph{strong budget balance}.

Given this, our objective is to design a TFM that is both AE and UIC while minimizing the transaction fees (or is \emph{weakly budget balanced)}. Motivated from~\citet{LaffontM79}, the mechanism design literature proposes the use of \textit{Groves' Redistribution Mechanism} (RM) for this purpose \cite{cavallo08,Guo12,LaffontM79}. In Groves' RM, the VCG mechanism is executed, and then the surplus money is redistributed among the users while preserving other game-theoretic properties. 

Along similar lines, this paper introduces \emph{Transaction Fee Redistribution Mechanisms} (TFRMs): a general class of TFMs based on RMs where the miner offers rebates from the transaction fees collected to the users while retaining AE, UIC, and IR. By offering users rebates, TFRMs, in effect, reduce the transaction fees paid by them.  Figure~\ref{fig::TFRM-pic} provides an overview.

%
\begin{figure*}[t]
    \centering
    \includegraphics[width=0.74\textwidth, trim={0pt 250pt 0pt 0pt},clip]{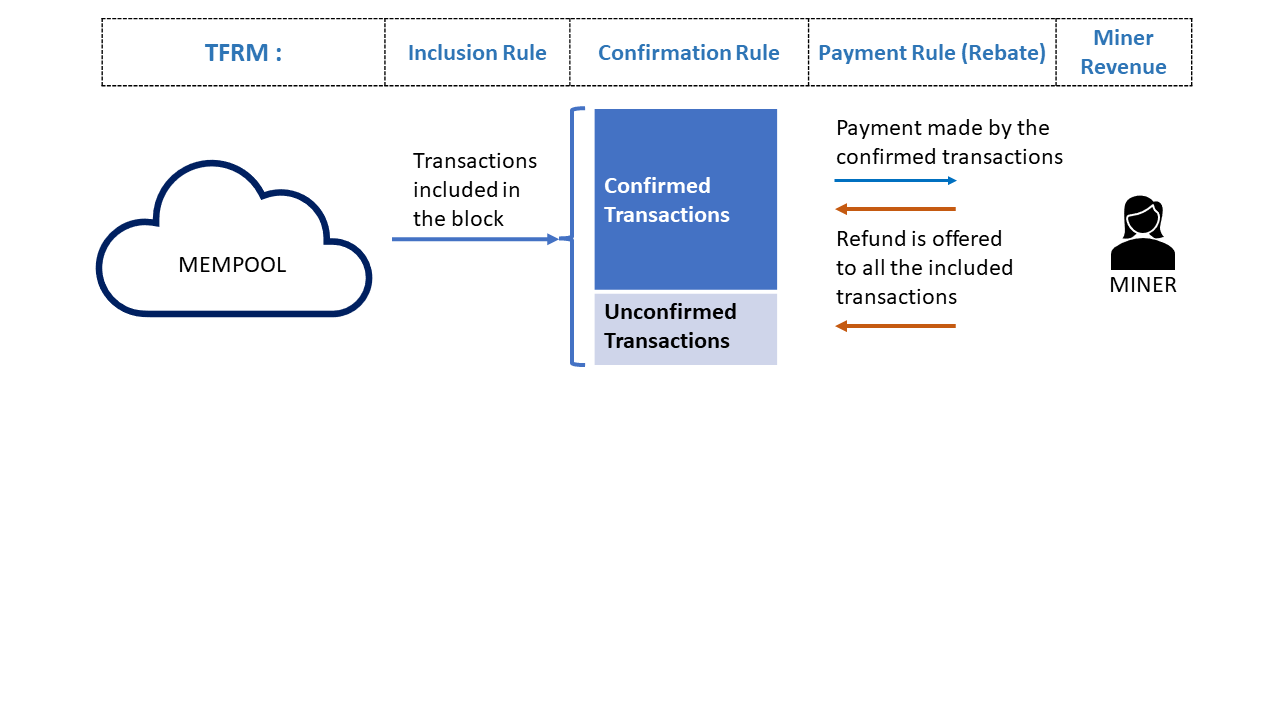}
    \caption{Overview of the framework for Transaction Fee Redistribution Mechanisms (TFRMs).}
    \label{fig::TFRM-pic}
\end{figure*}
%


\paragraph{TFRM: Challenges.} Designing such TFRMs has the following primary challenges.

\smallskip
\noindent \textit{Miner IC (MIC):} As miners possess complete control over the transactions included in their blocks~\cite{roughgarden21}, they may deviate from the intended TFM allocation rule (i.e., selecting a different subset from the mempool) and may introduce ``fake'' transactions (i.e., transactions created strategically to increase their revenue) into their blocks~\cite{roughgarden21}. This is similar to \emph{shill bidding}~\cite{porter2003cheating} in traditional auctions. Thus, it is imperative that a TFRM maintains AE, UIC, and low transaction fees even in the face of miner manipulation (or, alternately, in the presence of a \emph{strategic miner}).  

Roughgarden \cite{roughgarden21} introduces the notion of \emph{miner IC} (MIC) to model the miner's strategic behavior.  In auction theory, the Myerson-Satterthwaite impossibility theorem~\cite{myerson1983efficient} states that it is impossible to design a mechanism that is AE, IR, weakly budget balanced, and IC for both sides of the market. Designing TFRMs is analogous to such a two-sided auction, and achieving both sides' IC (with other properties) is elusive. 

\smallskip
 \noindent\textit{User IC (UIC):} Typically, RMs ensure UIC by offering rebates to everyone participating in the auction (irrespective of the allocation). In TFRMs, the transactions that are only part of the mempool (i.e., are not part of the block) are not available to the blockchain. Thus, unlike RMs, in TFRMs, we cannot offer rebates for each available transaction. As some transactions do not receive rebates, we can easily construct instances where the users of these transactions have an incentive to overbid to get included in the block and receive rebates. Thus, ensuring UIC in TFRM is non-trivial. As such, we propose \emph{restricted UIC} (RUIC), which ensures that bidding truthfully is a weakly dominant strategy \emph{only} for the users whose transactions are included in the block.

\paragraph{Our Contributions.} Broadly, we (i) formally introduce TFRMs (refer to Figure~\ref{fig::TFRM-pic} for an overview), (ii) analyze the challenges due to miner manipulation in vanilla-TFRMs, and (iii) introduce two novel TFRMs, namely \rtfrm~ and \rrtfrm~ that are robust to miner manipulation. We discuss these in detail next.


\begin{enumerate}[leftmargin=*,itemsep=0.2em]
    \item {\texttt{Ideal-TFRM}.} As we cannot offer rebates to all transactions in the mempool, we begin our analysis with an ``\texttt{Ideal-TFRM}" that offers non-zero rebates only to confirmed transactions. Unfortunately, we show that it is impossible for \texttt{Ideal-TFRM} to satisfy UIC while offering non-zero rebates to confirmed transactions~(Theorem~\ref{thm::ideal_tfrm_imp}).

\item \textbf{TFRM: Effect of A Strategic Miner.} We shift our focus to TFRMs that provide rebates to all transactions included in the block. An RM's effectiveness is measured using the Redistribution Index (RI) \cite{Guo07}, which is the fraction of the VCG surplus redistributed. To absorb the effect of strategic miners, we introduce \emph{Resilient Redistribution Index} (RRI). RRI measures the fraction of redistributed funds under optimal miner manipulation. We prove that it is impossible to design a TFRM that satisfies AE, RUIC, and is IR for both users (IR$_u$) and miners (IR$_\mathbf{M}$), while guaranteeing strictly positive RRI (Theorem~\ref{thm::imp_RRI}).

\item \textbf{Robust TFRM~(\rtfrm).} Given these impossibilities, we propose \rtfrm: a TFRM that guarantees strictly positive RRI \emph{and} satisfies all user-specific properties. However, \rtfrm\ is not individually rational for the miner\footnote{We remark that miners, or block proposers in general, often have alternate revenue streams (e.g., block rewards~\cite{nakamoto2008bitcoin} or attestation rewards~\cite{eth_stake_fees}). These rewards can primarily help absorb the reduction in revenue due to reduced transaction fees. They may also alleviate the lack of IR$_\mathbf{M}$ guarantee in \rtfrm. }. 

 At its core, \rtfrm\ builds on an RM with VCG payments as transaction fees and a linear rebate function. The rebate function maximizes the worst-case rebate while satisfying RUIC, IR$_u$, and  Approx-IR$_\mathbf{M}$. Designing such a rebate function is equivalent to solving the linear program in Figure~\ref{fig::nonir-tfrm} for its coefficients. We finally show that the payments are reduced by a fraction $k/n$, where $k$ transactions are confirmed out of the $n$ included in the block~(Theorem~\ref{thm:rri-nonir-tfrm}). The fraction remains the same, even with miner manipulation, i.e., RRI is also $k/n$. In other words, each confirmed user sees a reduction by $(1-k/n)$ in its transaction fee compared to the equivalent VCG-based TFM.

\item \textbf{Robust and Rational TFRM (\rrtfrm).} \rtfrm\ ensures positive RRI by compromising IR$_\mathbf{M}$. Another way of ensuring positive RRI is by randomly offering rebates to the users. Such an approach guarantees IR$_\mathbf{M}$, in expectation. \rrtfrm\ uses this approach wherein each user receives the rebate given by \rtfrm~ with probability $\alpha$ and does not receive any rebate probability $1-\alpha$. The randomization is carried out by the blockchain in a trusted manner~\cite{chung2021foundations}.   Theorem~\ref{thm:rrtfrm} shows that for $\alpha \in (0,\overline{\alpha}) \; \& \;\overline{\alpha}<1,$ \rrtfrm\ is AE, IR$_u$, and RUIC and IR$_\mathbf{M}$, in expectation. Further, it ensures an expected RRI of $\alpha\cdot k/n$.
\end{enumerate}

\section{Related Work} 
\label{sec::rw}
The role of transaction fees in decentralized cryptocurrencies such as Bitcoin and Ethereum has been studied in relation to (i) transaction latency~\cite{kelkar2020order,kursawe2020wendy,orda2021enforcing}, (ii) fairness~\cite{basu2019stablefees,bitcoinf} and most recently, as decentralized auction-based mechanisms~\cite{roughgarden21,parkes21,chung2021foundations,zhao2022bayesian}. 

\paragraph{Transaction Fee Mechanism (TFM).} \citet{roughgarden21} formulated the transaction issuer and miner interaction as an auction setting. More concretely, the author expresses popular mechanisms in the TFM framework, including first-price, second-price, and EIP-1559. \citet{roughgarden21} introduces user incentive compatibility (UIC), miner incentive compatibility (MIC), and off-chain agreement (OCA) proof as desirable incentive properties for TFMs. E.g., EIP-1559 satisfies UIC, MIC, and OCA-proof -- under specific constraints on the base fee. \citet{parkes21} present a dynamic posted-price TFM which is UIC (if the network is not congested) and MIC. The authors also provide an equilibrium posted price based on the demand. \citet{chung2021foundations} show it is impossible to construct a TFM that simultaneously satisfies UIC, MIC, and OCA-proof (even when only a single user and the miner collude). To address the impossibility, they introduce a \emph{penalty} to the creator of a fake transaction, discounted by a parameter ``$\gamma$." The authors present a randomized (based on trusted on-chain randomness) second-price auction that satisfies the three properties. Lastly, the authors in \cite{zhao2022bayesian} relax UIC to Bayesian UIC (BUIC) to construct another second-price-based auction, satisfying BUIC, MIC, and OCA-proof.

\paragraph{Redistribution Mechanism (RM).} 
Popular auction-based mechanisms like VCG and Groves~\cite{Vickrey61,Clarke71,Groves73} satisfy AE and UIC but do not satisfy SBB.
\citet{Faltings05} and \citet{Guo08} achieve SBB by compromising on AE. \citet{Hartline08} propose a mechanism that maximizes the sum of the agents' utility in expectation. \citet{Clippel14} ``destroy" some items to maximize the agents' utilities, leading to approximate AE and SBB. \citet{parkes01} propose an alternate approach by proposing an optimization problem which is approximately AE, SBB. 

\citet{LaffontM79} first propose the idea of redistribution of the surplus as far as possible after preserving UIC and AE. \citet{Bailey97}, \citet{cavalloRedis06}, \cite{Moulin09}, and \citet{Guo07} consider a setting of allocating $k$ homogeneous objects among $n$ competing agents with unit demand. \citet{Guo09} generalize their work in \cite{Guo07} to multi-unit demand to obtain worst-case optimal (WCO) RM.



In summary, the relatively recent TFM literature only focuses on the satisfiability of desirable incentive properties and \emph{not} on reducing the user cost. Given that a decentralized cryptocurrency (e.g., Bitcoin or Ethereum) is a public resource, re-imaging a TFM as an RM will (i) continue to guarantee these properties but, crucially, (ii) minimize the cost paid by the user.

\begin{table}[t]
    \centering
    \adjustbox{max width=\columnwidth}{
    \begin{tabular}{cl}
    \toprule
     \textbf{Symbol}    & \textbf{Notations} \\
    \midrule
    \multicolumn{2}{c}{TFM Model} \\
    \midrule
     $M:=\{1,\ldots,m\}$ & Mempool (set of outstanding transactions) \\
     $\mathbf{x}^I$ & Block inclusion rule \\
     $\mathbf{x}^C$  & Block confirmation rule \\
     $\mathbf{p}$ & Payment rule \\
     $\mathcal{T}:=(\mathbf{x}^I,\mathbf{x}^C,\mathbf{p})$ & TFM Tuple \\
     $\mathbf{b}:=\{b_1,\ldots,b_m\}$ & Bids present in the mempool \\
     $\Theta:=[\theta_i]$ & Set of valuations with each user $i$'s valuation $\theta_i$ \\
     $n\in\mathbb{Z}_{\geq 1}$ & Number of included transactions \\
     $k\in\mathbb{Z}:k\in[1,n]$ &  Number of confirmed transactions \\
     $\mathbf{b}_I$ & Bids of users included in the block \\
     $\mathbf{b}_{I\setminus i}$ & Bids of users included in the block without user $i$ \\
    $   u_i(\theta_i,\mathbf{b})$ & Utility of user $i$ \\
    $F$ & Set of fake transactions added by the miner \\
     $u_{\textsf{M}}(F,\mathbf{b})$ & Utility of the miner \\
         \midrule
    \multicolumn{2}{c}{TFRM Model} \\
    \midrule
    $g(\cdot)$ & Rebate Function \\
    $r_i$ & Rebate to user $i$ \\ 
    $c_i\in\mathbb{R},~\forall i \in \mathbf{b}_I$ & Rebate function constants \\
    $e_{\textsf{wc}}$ and $e_{\textsf{avg}}$ & Worst-case/Average Redistribution Index \\
    $\hat{e}_{\textsf{wc}}$ & Resilient Redistribution Index (RRI) \\
        \bottomrule   
    \end{tabular}}
    \caption{Notations}
    \label{app:tab::notations}
\end{table}

\section{Preliminaries\label{sec::prelim}}
We now (i) formally introduce TFMs, (ii) relevant game-theoretic definitions, and (iii) summarize redistribution mechanisms. Table~\ref{app:tab::notations} tabulates the notations used in this paper.

\paragraph{Transaction Fee Mechanism (TFM) Model.}
We have a strategic but myopic\footnote{We say that a miner is myopic if its utility function is its net revenue from the current block~\cite{roughgarden21,parkes21,chung2021foundations}.} miner building a block $B$ (with finite capacity) for the underlying blockchain. There are $m$ transactions available to be confirmed in a \emph{mempool} $M$. However, the block can hold only up to $n<m$ transactions. We assume that all transactions are of the same size.
Among the $n$ transactions included in the block, the miner confirms $k\leq n$ transactions\footnote{Such a setting is analogous to the `homogeneous' (unit demand) setting in the RM literature~\cite{Guo09}.}.

Let each user $i$ value the confirmation of its transaction at $\theta_i \in \mathbb{R}_{\geq 0}$. Each user $i$ submits a bid $b_i \in \mathbb{R}_{\geq 0}$. We have $\Theta := [\theta_i]$ and $\mathbf{b} := [b_i]$. Given the bid profile and valuation profile, the transaction fee mechanism is characterized by the inclusion rule, confirmation rule, and payment rule, as defined below.



\begin{definition}[Transaction Fee Mechanism (TFM)~\cite{chung2021foundations,roughgarden21}.] \label{def::tfm}
Given a bid profile $\mathbf{b}$, we define TFM as $\mathcal{T}:=(\mathbf{x}^I,\mathbf{x}^C,\mathbf{p})$ where: 
\begin{itemize}[leftmargin=*,noitemsep=0.01em]
    \item $\mathbf{x}^I$ is a feasible block inclusion rule, i.e., $\sum_{i\in M} x_i^I(\mathbf{b})\leq n$ where $x_i^I(\cdot)\in\{0,1\}$. Let the set of included transactions be $I = \{i | x^I_i = 1, i\in M \}$.
    \item $\mathbf{x}^C$ is a feasible block confirmation rule, i.e., $\sum_{i\in M}x_i^C(\mathbf{b})\leq k$ where $x_i^C(\cdot)\in\{0,1\}$. Let the set of confirmed transaction be $C = \{i: x_i^C = 1, i\in M\}$. Trivially, $C \subseteq I$.
    \item $\mathbf{p}$ is the payment rule with the payment for each included transaction, i.e., $\forall i\in \mathbf{x}^I$ the payment is denoted by $p_i(\mathbf{b},\mathbf{x}^I,\mathbf{x}^C)$.
\end{itemize}
\end{definition}

In TFMs, the included (but not confirmed) bids are often used as `price-setting' bids~\cite{chung2021foundations}. We use the example of a second-price TFM to explain Definition~\ref{def::tfm} better.

\begin{example}[Second-price TFM (SPA)~\cite{chung2021foundations,roughgarden21}.]\upshape\label{ex1}
W.l.o.g., assume that $\mathbf{b}=(b_1,\ldots,b_m)$ are bids in decreasing order. Now, the inclusion rule is $x_i^I=1,~\forall i\in \{1,\ldots,n\}$ and zero otherwise, i.e., the top $n$ transactions are included in the block. With $k = n -1$, the confirmation rule is $x_i^C=1,~\forall i\in \{1,\ldots, k\}$  and zero otherwise. The top $k$ (among $n$) transactions are confirmed, and the last included transaction is the price-setting transaction. Each confirmed user $i\in [k]$ pays $p_i=b_{k+1}$ to the miner and unconfirmed user ($i \in I \setminus C$) pays $p_i = 0$. The miner's net revenue is $k\cdot b_{k+1}$. 
\end{example}

In order to define the desirable properties of a TFM, we first define user and miner utilities.

\paragraph{Utility Model.} We first reiterate that we assume that the miners and transaction creators (or users) are myopic~\cite{roughgarden21,parkes21,chung2021foundations,zhao2022bayesian}. For each user $i$, let its transaction $i$'s valuation be $\theta_i\in\mathbb{R}_{\geq 0}$ with bid $b_i\in\mathbb{R}_{\geq 0}$. We have $\Theta:=[\theta_i]$ and $\mathbf{b}:=[b_i]$. Now, given $\mathcal{T}=(\mathbf{x}^I,\mathbf{x}^C,\mathbf{p})$, each user $i$'s \textit{quasi-linear} utility $u_i$ is defined as:
\begin{equation}\label{eqn::util}
    u_i(\theta_i,\mathbf{b}) :=  \left(\mathbbm{1}_{x_i^C=1}\cdot \theta_i \right) -p_i(\mathbf{b},\mathbf{x}^I,\mathbf{x}^C) 
\end{equation}

We now define the miner's utility. Since the miner has complete control over the transactions, it adds to its block~\cite{roughgarden21}, it can add a set of ``fake'' transactions (say $F$) to deviate from the intended allocation rule $\mathbf{x}=(\mathbf{x}^C,\mathbf{x}^I)$. The miner's utility (say $u_{\mathbf{M}}$), given $\mathcal{T}$, and for the block $B$, is given by:
\begin{equation}\label{eqn::miner-util}
    u_{\mathbf{M}}(F,\mathbf{b}) := \sum_{i\in  B\cap M} p_i(\mathbf{b},\mathbf{x}^I,\mathbf{x}^C)
\end{equation}
That is, the miner's utility only depends on the set of transactions $B\cap M$ since for transactions in $F$, it is paying to itself. 



\subsection{TFMs: Desirable Properties}\label{subsec:Tfm-dp} We now define the relevant incentive properties of a TFM (from \cite{roughgarden21,parkes21}). 
We begin by defining Individual Rationality.

\paragraph{Individual Rationality (IR).} To incentivize participation, mechanism designers also focus on IR.

\begin{definition}[(Ex-post) Individual Rationality (IR)]
Given a TFM $\mathcal{T}=(\mathbf{x}^I,\mathbf{x}^C,\mathbf{p})$, we say that it satisfies IR for both the user and miners if their utility post participation in the mechanism is non-negative, i.e., $u_i(\cdot)\geq 0, \forall i \in M$ and $u_{\mathbf{M}}(\cdot)\geq 0$.
\end{definition}

\noindent\underline{Note.} We denote a mechanism that is IR w.r.t. miner as IR$_\mathbf{M}$ and IR w.r.t. user as $\mbox{IR}_u$.

\paragraph{User Incentive Compatibility (UIC).} To provide a good user experience, TFMs must satisfy UIC, i.e., they must incentivize users to report their true valuation as their bids (or transaction fees). 

\begin{definition}[User Incentive Compatibility (UIC)~\cite{roughgarden21}] Given a TFM $\mathcal{T} =(\mathbf{x}^I,\mathbf{x}^C,\mathbf{p})$, we say that each user $i$'s strategy $b_i^\star=\theta_i$ satisfies UIC, if bidding $b_i^\star$ maximizes its utility  $u_i$ (Eq.~\ref{eqn::util}), irrespective of the bids of others. More formally, $\forall i\in M$ we have
$$
u_i(\theta_i,b_i^\star=\theta_i,\mathbf{b}_{- i}) \geq u_i(\theta_i,b_i,\mathbf{b}_{-i}), \forall \theta_i, \forall \mathbf{b}_{- i},
$$
where $\mathbf{b}_{-i}$ are the bids of all users excluding $i$.
\end{definition}
As we show later, ensuring UIC while minimizing transaction fees in a TFM is challenging. Hence, we focus on the incentive compatibility of a `restricted' set of users whose transactions are included in the block. We define restricted UIC (RUIC), which states that for all the included users, reporting truthfully is IC irrespective of what the remaining included users report. 

\begin{definition}[Restricted UIC (RUIC)] \label{def::ruic}
Given a TFM $\mathcal{T} = (\mathbf{x}^I,\mathbf{x}^C,\mathbf{p})$, we say that RUIC is satisfied if, $\forall i$ included in the block i.e., $\forall i \in I$ we have,
$$
u_i(\theta_i,b_i^\star=\theta_i,\mathbf{b}_{I\setminus i}) \geq u_i(\theta_i,b_i,\mathbf{b}_{I\setminus i}), \forall \theta_i, \mathbf{b}_{I\setminus i}
$$
where $\mathbf{b}_{I\setminus i}$ is the bids of users included in the block excluding user $i$.
\end{definition}




\paragraph{Other Properties.} Outside of these common TFM properties, we also define additional properties, namely allocative efficiency (AE) and weakly/strongly budget balance (WBB/SBB) next.  

\begin{definition}[Allocative Efficiency (AE)]
\label{def:ae}
We say that a TFM $\mathcal{T} = (\mathbf{x}^I,\mathbf{x}^C,\mathbf{p})$ satisfies AE, if given $\Theta$, the mechanism confirms the transactions which maximizes the overall valuation. That is, for any given $\Theta$ and every feasible allocation $\mathbf{x}^C$, we have:
$
\mathbf{x}^\star := \argmax_{\mathbf{x}^C} \sum_{i\in M} x_i^C\cdot \theta_i 
$
\end{definition}

\begin{definition}[Weakly/Strongly Budget Balance (WBB/SBB)]
We say that a TFM $\mathcal{T}=(\mathbf{x}^I,\mathbf{x}^C,\mathbf{p})$ satisfies WBB, if the total payment to the miner is non-negative, i.e., $\sum_{i \in I} p_i\geq 0$. When the equality holds, a TFM is strongly budget balanced (SBB), i.e., $\sum_{i\in I} p_i = 0$.
\end{definition}

\subsection{Groves' Redistribution Mechanism (RM)}
\label{subsec:prelim_rm}
Towards minimizing the user cost in a TFM, we employ \emph{Redistribution Mechanisms} (RMs)~\cite{LaffontM79}. In RM, the agents are charged VCG payments and the money is redistributed back to agents while ensuring UIC. The redistribution  is decided by constructing an appropriate rebate function, $g: \mathbf{b} \rightarrow \mathbb{R}$. We desire rebate functions that ensure maximum rebate (or, equivalently, minimize the transaction fees in TFM). To utilize an RM as a TFM, we require that it satisfies UIC/RUIC and (ex-post) IR for the users and the miner. An RM is IR for users when each user's overall payment, including the rebate, provides a non-negative utility. Likewise, we say that an RM is IR for the miner when the total rebate is less than the payment (transaction fees) received. We also want the RM to be \emph{anonymous}~\cite{Guo09}.


 \begin{definition}[Anonymity]\label{def:anonymity}
An RM satisfies anonymity if the rebate function is same for all the users, i.e., $\forall i,j\in [n]$ and $i\not=j$, $g_{i}(\cdot) = g_{j}(\cdot) = g(\cdot)$. 
 \end{definition}
 
Note that, an anonymous rebate function may still result in different redistribution payments to different users as the input to the function may be arbitrarily different.


\paragraph{Rebate Function.}
We aim to design an appropriate rebate function for an anonymous RM such that incentive properties from Section~\ref{subsec:Tfm-dp} hold. The rebate function must also redistribute most of the payments (VCG payments) as possible to minimize the user cost. We begin by providing the following characterization for designing UIC rebate functions.

\begin{theorem}[\cite{Gujar11}] \label{thm:order} In an RM, any deterministic, anonymous rebate function $g(\cdot)$ is UIC iff the rebate for user $i$ is defined as 
$r_{i} := g(b_{1},b_{2},\ldots,b_{i-1},b_{i+1},\ldots,b_{n}),~ \forall i \in [n],$
where $b_{1} \geq b_{2} \geq \ldots \geq b_{n}.$
\end{theorem}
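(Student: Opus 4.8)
The plan is to prove the two directions of the equivalence separately, exploiting the fact that the VCG payment rule already makes truthful bidding optimal for the ``valuation minus payment'' portion of each user's utility in Eq.~\ref{eqn::util}. Consequently the entire question reduces to whether the \emph{added} rebate term can disturb this optimizer, and both directions will hinge on how the rebate varies (or fails to vary) with a user's own report.

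For the ($\Leftarrow$) direction I would invoke the classical Groves argument. If user $i$'s rebate is $r_i = g(b_1,\ldots,b_{i-1},b_{i+1},\ldots,b_n)$, then $r_i$ is a function of $\mathbf{b}_{-i}$ alone, hence a constant from $i$'s perspective when contemplating a deviation from $\theta_i$. Since $\mathbbm{1}_{x_i^C=1}\theta_i - p_i$ under VCG payments is maximized at truthful reporting for every fixed profile of the others, adding a quantity that $i$ cannot influence leaves the maximizer unchanged. The one subtlety specific to anonymity (Definition~\ref{def:anonymity}) that I must address is that changing $b_i$ reshuffles the sorted order $b_1\geq\cdots\geq b_n$; but because the argument of $g$ for user $i$ explicitly omits $b_i$, the sorted list of the remaining $n-1$ bids is invariant under any unilateral deviation by $i$, so $r_i$ genuinely does not move.

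For the ($\Rightarrow$) direction, which is the substance of the theorem, I would fix an arbitrary $\mathbf{b}_{-i}$ and analyze user $i$'s total utility as a function of its own report. In the homogeneous unit-demand VCG the winning threshold is $\tau(\mathbf{b}_{-i})$, the $k$-th largest of the other bids, and a winner's payment equals this same $\tau$, crucially independent of $b_i$. Thus a winning report yields $\theta_i - \tau + g(b_i,\mathbf{b}_{-i})$ and a losing report yields $g(b_i,\mathbf{b}_{-i})$. Applying UIC within the ``win'' region across all true types $\theta_i>\tau$ forces $g(\cdot,\mathbf{b}_{-i})$ to take a common value over all winning reports; applying it within the ``lose'' region forces a common value over all losing reports. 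I denote these two constants $g_{\mathrm{win}}$ and $g_{\mathrm{lose}}$.

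The crux is then equating the two constants. I would take a true type just above $\tau$ and consider its deviation to a losing bid, giving $\theta_i-\tau+g_{\mathrm{win}}\geq g_{\mathrm{lose}}$, hence $g_{\mathrm{win}}\geq g_{\mathrm{lose}}$ as $\theta_i\downarrow\tau$; symmetrically a true type just below $\tau$ deviating to a winning bid gives $g_{\mathrm{lose}}\geq \theta_i-\tau+g_{\mathrm{win}}$, hence $g_{\mathrm{lose}}\geq g_{\mathrm{win}}$ as $\theta_i\uparrow\tau$. Together these force $g_{\mathrm{win}}=g_{\mathrm{lose}}$, so the rebate is the same constant regardless of $b_i$, i.e.\ it depends on $\mathbf{b}_{-i}$ only; anonymity then lets me write this common value as the single function $g$ evaluated on the sorted other bids, yielding exactly the claimed form. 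I expect this boundary step -- establishing $g_{\mathrm{win}}=g_{\mathrm{lose}}$ by approaching the threshold from both sides -- to be the main obstacle, since it is precisely where a careless argument that only shows ``constant within each region'' stops short of proving full independence of the own bid.
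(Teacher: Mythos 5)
The paper does not prove this statement itself: Theorem~\ref{thm:order} is imported verbatim from the cited reference \cite{Gujar11}, so there is no in-paper proof to compare against. On its own merits, your argument is correct and is the standard one for this characterization. The ($\Leftarrow$) direction is the usual Groves observation, and you correctly flag and dispose of the one anonymity-specific subtlety (a unilateral deviation by $i$ reshuffles the sorted full profile but leaves the sorted profile of the other $n-1$ bids untouched, so $r_i$ is genuinely constant in $b_i$). The ($\Rightarrow$) direction via the taxation principle is also sound: constancy of the rebate on the winning region and on the losing region each follow from pairwise swaps of ``true type'' and ``deviation'' within the region, and your two-sided limit at the threshold $\tau$ (types $\tau+\epsilon$ deviating to lose give $g_{\mathrm{win}}\geq g_{\mathrm{lose}}$, types $\tau-\epsilon$ deviating to win give $g_{\mathrm{lose}}\geq g_{\mathrm{win}}$) correctly closes the gap that a region-by-region argument alone would leave. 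This is essentially the argument one finds in the cited source, so I see no gap.
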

The rebate function is UIC if the rebate for a user $i$ is independent of its own bid. In general, it could take any form. E.g., the linear rebate function is defined as,
 
\begin{definition}[Linear Rebate Function~\cite{Guo07}] \label{def:lr}
The rebates to a user $i$ follow a linear rebate function if the rebate is a linear combination of the bid vectors of all the remaining agents. That is, $r_{i} = c_0 + c_1 b_1 +\ldots + c_{i-1} b_{i-1}+ c_i b_{i+1} + \ldots + c_{n-1}b_{n-1}$ where $c_j\in\mathbb{R},~\forall j$. 
\end{definition}




The fraction of VCG payment redistributed, given a rebate function, depends on the input bids. Thus, we study the worst-case and average-case performance of the rebate functions.


\begin{definition}[Worst-case/Average Redistribution Index (RI)~\cite{Guo07}] \label{def:RI} 
The Worst-case or Average RI of an RM is defined as the worst-case or average-case fraction of VCG surplus that gets redistributed among the users, respectively. That is, given that $p(\mathbf{b}) = \sum_i p_i(\mathbf{b})$, the total VCG payment collected: 
\begin{equation}
    e_{\textsf{wc}} = \inf_{\mathbf{b}:p(\mathbf{b})\neq 0} \frac{\sum_{i} r_{i}}{p(\mathbf{b})} \mbox{~and~}  e_{\textsf{avg}} = \mathbb{E}_{\mathbf{b}:p(\mathbf{b})\neq 0}\left[ \frac{\sum_{i} r_{i}}{p(\mathbf{b})}\right]
\end{equation}

\end{definition}

E.g., \citet{Guo09} propose \emph{Worst-case Optimal (WCO)}, a mechanism that uniquely maximizes the worst-case RI (among all RMs that are deterministic, anonymous, and satisfy UIC, AE, and IR) when the items are homogeneous with unit demand~~\cite[Theorem 1]{Guo09}. In the blockchain, we assume the availability of $k$ confirmation slots, and any user equally values transaction confirmation at every slot. Moreover, each transaction only requires one slot for confirmation. Thus, this is a homogeneous setting with unit demand.


\section{\texttt{Ideal-TFRM}: Impossibility of Achieving Strictly Positive Redistribution Index\label{subsec_ideal_tfrm_imp}}

We now present a first attempt at implementing an RM for minimizing transaction fees using the second-price TFM  from Example~\ref{ex1}). In a second-price TFM, the transactions/bids are sorted in decreasing order. The top $k$ bids are confirmed with the $n=(k+1)^{th}$ transaction as the price-setting one. Each confirmed transaction pays $p = b_{k+1}$; with the net miner revenue as $k\cdot b_{k+1}$. To minimize the user fees, we must ``redistribute'' the collected surplus.

It may be preferable to only provide rebates to users whose transactions are confirmed (i.e., are among the top $k$ bids) since each such user pays $b_{k+1}$. If we also provide rebates to the remaining $n-k$ users, the remaining transactions in the mempool may prefer to overbid just enough also to get included in the block. By doing so, they can grab rebates for free\footnote{Assigning future costs to transactions not confirmed, e.g., as in \cite{chung2021foundations} (Section~\ref{sec::rw}), may help overcome such manipulation. We leave the analysis for future work.}. Thus, to achieve UIC, we must not provide rebates to unconfirmed transactions. With this motivation, we propose the following.

\smallskip
\noindent\underline{\texttt{Ideal-TFRM}}. The goal is to maximize the fraction of VCG payments redistributed to the users, denoted by $f$, while ensuring non-zero rebates only to confirmed transactions. Further, in \texttt{Ideal-TFRM}, we would like the rebate offered to each user to be less than the payment it makes. Eq.~\ref{eqn_ideal_tfrm} captures this optimization.
\begin{equation}\label{eqn_ideal_tfrm}
\begin{rcases}
    \max_{r_i, i\in \mathbf{x}} f
    \mbox{~\textbf{s.t.}~} &  \sum_{i\in C} r_i\geq f\cdot\sum_{i\in C} p_i \\
    & \mbox{~\textbf{and}~} p_i\geq r_i,~\forall i \in C \mbox{~\textbf{and}~}
    r_i=0,~\forall i \in I\setminus C
\end{rcases}
\end{equation}

Here, $r_i = g(b_{1},b_{2},\ldots,b_{i-1},b_{i+1},\ldots,b_{n})$ is the rebate (Definition~\ref{def:lr}) for each user $i\in M$ with $p_i$ as the VCG payment. The goal is to find an optimal $g(\cdot)$ such that $e_{\textsf{wc}}$ is maximized.

Unfortunately, we now show that for both the worst and average-case, \texttt{Ideal-TFRM} admits zero rebates for the users with confirmed transactions, i.e., $r_i=0,~\forall i \in\mathbf{x}^C$ while guaranteeing UIC.

\subsubsection*{Worst-case Rebate} Theorem~\ref{thm::ideal_tfrm_imp} formally shows the impossibility of simultaneously guaranteeing UIC and minimizing transaction fees in \texttt{Ideal-TFRM}.

\begin{theorem}[\texttt{Ideal-TFRM} Impossibility]\label{thm::ideal_tfrm_imp}
If $r^\star$ is an anonymous rebate function that satisfies Theorem \ref{thm:order},  no \texttt{Ideal-TFRM} can guarantee a non-zero redistribution index (RI) in the worst case.
\end{theorem}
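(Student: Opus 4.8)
The plan is to exploit the tension between three ingredients simultaneously: anonymity (a single shared rebate function $g$), the UIC characterization of Theorem~\ref{thm:order} (each user's rebate must ignore its own bid), and the hard constraint $r_i = 0$ for every included-but-unconfirmed transaction in the program of Eq.~\ref{eqn_ideal_tfrm}. The goal is to show these jointly force $g$ to vanish on its \emph{entire} domain, so that every rebate---including those to confirmed users---is identically zero. Once $g \equiv 0$, the numerator $\sum_{i\in C} r_i$ in the Worst-case RI is always $0$ while $p(\mathbf{b})>0$ for generic bids, giving $e_{\textsf{wc}}=0$ and hence the claimed impossibility (in fact the argument simultaneously kills the average-case index too).

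First I would fix the second-price instance of Example~\ref{ex1}, where bids are sorted $b_1 \ge \dots \ge b_n$, the top $k = n-1$ transactions are confirmed, and transaction $n$ is the sole included-but-unconfirmed (price-setting) transaction, so $I \setminus C = \{n\}$. By Theorem~\ref{thm:order}, the UIC anonymous rebate takes the form $r_i = g(b_1, \dots, b_{i-1}, b_{i+1}, \dots, b_n)$ for one fixed function $g$ of $n-1$ sorted arguments. The final constraint of \texttt{Ideal-TFRM} forces the price-setting transaction to receive nothing, i.e.
\[
r_n \;=\; g(b_1, \dots, b_{n-1}) \;=\; 0 .
\]

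The crux of the argument---and the step I expect to be the main obstacle---is to observe that this identity must hold for \emph{every} profile reaching the mechanism, not merely at one point. Since $b_n$ can be chosen freely in $[0,b_{n-1}]$ (e.g.\ $b_n=0$) without altering the argument $(b_1,\dots,b_{n-1})$ of $g$, as the profile varies this tuple sweeps out the full domain of $g$, namely all sorted $(n-1)$-tuples of non-negative reals. Anonymity is what makes this lethal: it is the \emph{same} $g$ that computes every user's rebate, so pinning $g$ to zero on its whole domain via the single unconfirmed-transaction constraint cannot be localized to one coordinate. One must argue carefully that this pointwise-looking constraint propagates to a global vanishing of $g$ rather than restricting only a single value.

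Finally I would close the loop: for any confirmed user $i \in C$, its rebate argument $(b_1, \dots, b_{i-1}, b_{i+1}, \dots, b_n)$ is itself a sorted $(n-1)$-tuple lying in the domain of $g$, so $r_i = g(\cdot) = 0$ as well. Consequently $\sum_{i \in C} r_i = 0$, and since the VCG revenue $p(\mathbf{b}) = k\cdot b_{k+1} > 0$ for generic bids, the redistributed fraction $f$ is forced to $0$; therefore $e_{\textsf{wc}} = \inf_{\mathbf{b}:\,p(\mathbf{b})\neq 0} \big(\sum_i r_i\big)/p(\mathbf{b}) = 0$, which establishes that no such \texttt{Ideal-TFRM} can guarantee a non-zero redistribution index.
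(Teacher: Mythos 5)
Your proposal is correct and follows essentially the same route as the paper's proof: anonymity plus the zero-rebate constraint on the lone unconfirmed transaction pins $g(b_1,\dots,b_{n-1})$ to zero across all profiles, and since the confirmed users' rebate arguments are themselves sorted $(n-1)$-tuples in that same domain, every rebate vanishes and $e_{\textsf{wc}}=0$. The only cosmetic difference is that you argue the global vanishing of $g$ directly, whereas the paper exhibits an explicit permuted bid profile and derives a contradiction with an assumed positive rebate for agent~$1$; the underlying mechanism is identical.
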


\begin{proofsketch} Informally, if the rebate function is anonymous (Definition ~\ref{def:anonymity}) and the user with the highest valued transaction is confirmed and receives a positive rebate. Then, it is easy to show that there exists a different bid profile for which the last unconfirmed transaction must receive the same positive rebate. Therefore, the only way to ensure zero rebate for unconfirmed transactions is to also ensure zero rebate for every confirmed transaction. Thus, the worst-case RI is zero. See Appendix~\ref{app:thm_proof_ideal_tfrm} for the formal proof.
\end{proofsketch}





\subsubsection*{Average-case Rebate} Theorem~\ref{thm::ideal_tfrm_imp} shows that $e_{\textsf{wc}}=0$ in \texttt{Ideal-TFRM}, for a linear rebate function. We now aim to find a \textit{non-linear} rebate function that maximizes $e_{\textsf{avg}}$ in \texttt{Ideal-TFRM}. However, it is analytically intractable to characterize similar results to show the outcome of a rebate function that maximizes $e_{\textsf{avg}}$. As such, we simulate the optimization in Eq.~\ref{eqn_ideal_tfrm} as a Neural Network (NN), similar to \cite{manisha2018,tacchetti2019neural,dutting2019optimal}.

\paragraph{Architecture \& Setup.}  We consider a typical 3-layer feed-forward NN with bias, ReLU activation, and with AdamW optimizer. The input to our NN is the $n$-dimensional bid vector $\mathbf{b}_I$ sampled from a specific distribution. Each hidden layer comprises $2n$ neurons, with $n$ as the output layer's dimension. Given $\mathbf{b}_I$, the NN computes the payments and rebates to the confirmed and included transactions. 

\paragraph{Loss Function.} For optimization, our loss function is a weighted sum of the following three quantities: (i) average rebate to the $n$ bidders (denote as $r_\textsf{avg}$), (ii) feasibility, i.e., $\sum_i r_i \leq \sum_i p_i$ (denote as $r_\textsf{feas}$) and (iii) zero-rebate, i.e., $r_i=0,  \forall i \in I\setminus C $  (denote as $r_\textsf{zero}$. More concretely, for weights $\beta_1,\beta_2 \in (0,1)$  the loss function takes the form:
$
\textsf{Loss} = r_\textsf{avg} + \beta_1 \cdot r_\textsf{feas} + \beta_2 \cdot r_\textsf{zero}.$

\paragraph{Training Details.} We keep $n=10$ with number of confirmed transactions as $k=7$. For the optimizer, we choose a fixed learning rate $\eta = 5e-4$. The batch size is 1000, and we train for 50,000 epochs.

\paragraph{Results.} We observe that $e_{\textsf{avg}}\approx 0$ when transactions are sampled from $\mathcal{U}[0,1]$ and $\mathcal{N}(0,1)$. That is, the average case rebate to confirmed transactions is zero, even with non-linear rebate functions.

We conclude that it is impossible to design a TFRM with a linear rebate function that is UIC~(Theorem \ref{thm:order}) and offers a non-zero rebate to any agent. Our experiments also highlight that this may also be unlikely for non-linear rebate functions. Therefore, the next section introduces the general TFRM framework, where we focus on restricted UIC.

\section{Transaction Fee Redistribution Mechanism (TFRM)\label{sec::tfrm}}

As both $e_{\textsf{avg}}\approx e_{\textsf{wc}}=0$ for \texttt{Ideal-TFRM}, we must also provide rebates to users whose transactions are included but not confirmed. With this, we present the general TFRM framework in Figure~\ref{fig:TFRM_framework}.

\begin{figure}[t]
\small
\noindent\fbox{\parbox{0.97\columnwidth}{
\begin{enumerate}[leftmargin=*,itemsep=0.2em]
    \item \textsl{Inclusion Rule $(\mathbf{x}^I)$.} Select highest $n$ transactions from the mempool, $M$. W.l.o.g., assume that these $n$ transactions are ordered as $b_1\geq b_2\geq \ldots \geq b_n \implies x_i^I=1,~\forall i \in \{1,\ldots, n\}$.
    \item \textsl{Confirmation Rule $(\mathbf{x}^C)$.} Select highest $k$ bids from the $n$ included, $x_i^C=1,\forall i \in [k]$, where $k\le n-2$.
    \item \textsl{Payment Rule $(\mathbf{p})$.} Each confirmed user $i$ (i.e., $i\in C$) pays $p_i = b_{k+1} - r_i$. Each included but not confirmed user $j$ ($j\in I\setminus C$)  pays $p_j = - r_j$. 
    \item \textsl{Miner Revenue Rule.} The miner receives the net revenue of $\sum_{i=1}^n p_i$.
\end{enumerate}
}}
\caption{\label{fig:TFRM_framework} Transaction Fees Redistribution Mechanism (TFRM): General Framework.}
\end{figure}


In a TFRM, out of the $m$ outstanding transactions in the mempool, we include the $n$ highest bids in the block (denoted by the set $I$). Among the bids in the block, we confirm the $k$ highest bids (denoted by the set $C$) where $n\geq k+2$. The remaining bids (denoted by the set $P$) are included but not confirmed; we refer to them as included or price-setting transactions. That is, $|I| = n, |C| = k$ and $|P|=n-k$.  W.lo.g., we assume that $b_1 \ge b_2 \ge \ldots \ge b_n$. Hence, $C=\{b_1, \ldots b_k\}$ and $P = \{b_{k+1}, \ldots, b_n\}$. Each user $i$'s payment is computed based on the VCG payments and a rebate function, with $r_i$ as the rebate to user $i$. Since $k$ bids are confirmed, the VCG payment for the confirmed transactions is the $(k+1)^{th}$ highest bid, i.e., $b_{k+1}$.

\paragraph{TFRM: RUIC.}  While the rebate function satisfies Theorem \ref{thm:order}, note that TFRM is not UIC. E.g., users not part of the block may report $b>\theta$ to grab the additional rebate, as only confirmed bids pay the transaction fee. However, TFRM satisfies RUIC, i.e., it is UIC for agents included in the block to bid their true valuation. All included users, confirmed or not, are offered rebates implying that $k$ slots offered to $n$ users is equivalent to the allocation of $k$ resources among $n$ agents. Thus, by Theorem \ref{thm:order}, TFRM is RUIC.


We now show that in the presence of a strategic miner, the TFRM in Figure~\ref{fig:TFRM_framework} results in net zero rebates to confirmed users. 

\paragraph{TFRM: Effect of Strategic Miners.}
In general, strategic miners may introduce fake transactions to increase their revenue. In the second-price TFM itself (Example~\ref{ex1}), the miner may introduce a fake bid $\hat{b}_{k+1} = b_k$ to increase its revenue to $k\cdot b_k$ from the intended $k\cdot b_{k+1}$. 
Similarly, fake transactions can also affect the rebate offered by a TFRM. The miner's deviation may result in the following: (i) Fake bids affect the rebate of all agents, potentially reducing the rebate and (ii) As a fake bidder, the miner pays the rebate to itself.

Thus, designing TFRMs to minimize transaction fees may only work if made resilient to such strategic manipulations. Unlike RMs, TFRMs must quantify the rebate redistributed to the genuine users. Towards this, we define the following metric:

\begin{definition}[Resilient Redistribution Index (RRI)]
\label{def:rri}
    Given that the miner manipulates the bids $\mathbf{b}$ to $\hat{\mathbf{b}}$,  RRI is the fraction of the received payments that are redistributed in the worst-case to the actual users. Given $C$ confirmed and $P$ unconfirmed users, let $S \subseteq I$ be the subset of users that are not impersonated by the miner. Then:
    $\hat{e}_{\textsf{wc}} = \inf_{\hat{\mathbf{b}}, p(:,\hat{\mathbf{b}}) \neq 0} \frac{\sum_{i\in S} r_i}{p(\cdot;\hat{\mathbf{b}})}.$ 
\end{definition}

\paragraph{TFRM: Impossibility of Strictly Positive RRI.}
It is desirable to have a TFRM that is AE, RUIC, IR$_\mathbf{M}$ and IR$_u$ while ensuring strictly positive RRI in the worst-case, i.e., $\hat{e}_{\textsf{wc}} > 0$. Unfortunately, Theorem~\ref{thm::imp_RRI} proves that it is impossible to design such a TFRM with strategic miners. Appendix~\ref{app::thm_imp_rri} provides the formal proof.

\begin{theorem}[TFRM: RRI Impossibility]\label{thm::imp_RRI}
    Given a strategic miner, it is impossible to design a TFRM with a linear rebate function that is RUIC, AE, both IR$_u$ and IR$_\mathbf{M}$, and guarantees a strictly positive RRI, i.e., $\hat{e}_{\textsf{wc}}>0$.
\end{theorem}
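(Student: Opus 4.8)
The plan is to argue by contradiction: suppose some TFRM built on the framework of Figure~\ref{fig:TFRM_framework} uses a linear, anonymous rebate function (Definitions~\ref{def:anonymity} and~\ref{def:lr}) and is simultaneously AE, RUIC, IR$_u$, and IR$_{\mathbf{M}}$, while achieving $\hat{e}_{\textsf{wc}}>0$. First I would fix the algebraic form of the rebate. By RUIC together with Theorem~\ref{thm:order} and anonymity, the rebate of the rank-$i$ user is $g$ evaluated on the remaining sorted bids and is independent of $b_i$; being linear it has fixed coefficients $c_0,c_1,\dots,c_{n-1}$, so $r_i = c_0 + \sum_{j<i} c_j b_j + \sum_{j>i} c_{j-1} b_j$. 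AE fixes the confirmation rule to the top $k$ bids, so the VCG price is $b_{k+1}$ and the collected payment is $k\,b_{k+1}$. Throughout I would track the partial sums $C_\ell := \sum_{j=1}^{\ell} c_j$ (with $C_0=0$), which turn out to be the pivotal quantities.

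Next I would convert each property into a constraint on these coefficients. For IR$_u$, note that any price-setting (included-but-unconfirmed) user has utility exactly equal to its own rebate, so IR$_u$ forces $r_i\ge 0$ on every profile; writing such an $r_i$ as a linear form over non-increasing non-negative bids and applying an Abel (summation-by-parts) argument yields $c_0\ge 0$ and $C_\ell\ge 0$ for all $\ell$. For IR$_{\mathbf{M}}$, the honest miner revenue $k\,b_{k+1}-\sum_i r_i \ge 0$ must hold on every profile. I would compute $\sum_i r_i$ in closed form — the coefficient of $b_\ell$ is $(n-\ell)c_\ell + (\ell-1)c_{\ell-1}$ — and again apply summation-by-parts in the gaps $d_m = b_m - b_{m+1}\ge 0$. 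Matching coefficients gives $c_0=0$ (with IR$_u$) and $\tilde{A}_m := (n-m)c_m + n\,C_{m-1} \le 0$ for every $m\le k$. Taking $m=k$ and combining with $C_{k-1}\ge 0$ forces $c_k\le 0$, and hence $C_k = C_{k-1}+c_k \le \tfrac{k}{n}c_k \le 0$.

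Finally I would show that a strictly positive RRI is incompatible with $C_k\le 0$ by exhibiting the worst-case miner manipulation of Definition~\ref{def:rri}. Let a single genuine transaction (the true highest bid) be confirmed, and let the miner fill the remaining $n-1$ slots with fake bids it fully controls, setting $\hat{b}_2=\dots=\hat{b}_{k+1}=t$ and $\hat{b}_{k+2}=\dots=\hat{b}_n=0$. Then the only genuine rebate is $r_1 = c_0 + t\,C_k$, while the collected payment is $k\,\hat{b}_{k+1}=kt$, so the RRI ratio equals $\tfrac{c_0 + t\,C_k}{kt}\to \tfrac{C_k}{k}$ as $t\to\infty$. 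Hence $\hat{e}_{\textsf{wc}}\le C_k/k$, and $\hat{e}_{\textsf{wc}}>0$ forces $C_k>0$, contradicting $C_k\le 0$ from the previous paragraph. I would note that $k\le n-2 < n$ guarantees both $n-k>0$ (used in the sign step) and the existence of a price-setting slot (used for the IR$_u$ non-negativity argument).

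The hard part will be constructing the right manipulation and reducing the four game-theoretic requirements to a single scalar inequality on $C_k$: the manipulation must keep the denominator $p$ bounded away from zero while holding the genuine rebate's growth down relative to the price, and the IR$_{\mathbf{M}}$ budget condition must be treated as a constraint over \emph{all} profiles (via summation-by-parts) rather than at a single instance. Once both are phrased in terms of the partial sums $C_\ell$, the contradiction $C_k>0$ versus $C_k\le 0$ is immediate.
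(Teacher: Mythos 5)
Your proposal is correct and reaches the paper's contradiction, but by a somewhat different route that is worth comparing. Where the paper simply invokes Claim 1 of Guo and Conitzer to conclude that IR$_u$ and IR$_\mathbf{M}$ together force $c_1=\cdots=c_k=0$ (so that every confirmed user's rebate depends only on $b_{k+2},\ldots,b_n$), you re-derive the needed consequence from first principles: the summation-by-parts reduction of IR$_u$ to $C_\ell\ge 0$ and of IR$_\mathbf{M}$ to $(n-m)c_m+nC_{m-1}\le 0$ for $m\le k$ is sound (your coefficient $(\ell-1)c_{\ell-1}+(n-\ell)c_\ell$ for $b_\ell$ in $\sum_i r_i$ is right), and it yields the weaker but sufficient conclusion $C_k\le 0$ --- in fact $C_k=0$ once combined with IR$_u$, and pushing the same inequalities inductively would recover the cited claim in full. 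Your manipulation also differs: the paper keeps all $k$ genuine users confirmed and fakes only the price-setting slots ($\hat b_{k+1}=b_k$, $\hat b_{k+2}=\cdots=\hat b_n=0$), so that every genuine rebate vanishes while the payment $k b_k$ stays positive; you instead retain a single genuine bidder and fill every other slot, confirmed ones included, with fakes, which works because the lone genuine rebate is then exactly $c_0+tC_k=0$. Both manipulations are admissible in the model (the paper's own example allows fakes in confirmed slots), though the paper's version makes the marginally stronger point that the miner need not displace any genuine confirmed transaction. One small tidy-up: since you establish $c_0=0$ before the final step, the ratio equals $C_k/k$ for every finite $t$, so the limit $t\to\infty$ is unnecessary. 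What your approach buys is self-containedness (no reliance on the external claim); what the paper's buys is brevity and a cleaner, less invasive deviation.
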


\begin{proofsketch} A linear rebate that is RUIC and IR$_u$ and IR$_\mathbf{M}$ must depend only on bids $b_{k+2}, \ldots, b_{n}$. Changing these bids does not affect the payment received by the miner. Hence, the miner can replace these bids with fake bids such that the rebate is zero without any change in the payments.
\end{proofsketch}

From these results, we establish that preventing user manipulation entirely is not possible in the TFRM framework. Therefore, we focus on ensuring Restricted UIC (RUIC), which ensures that users of transactions included in the block will not misreport their values. Further, we know from Theorem \ref{thm::imp_RRI} that even with RUIC, any known RM that satisfies all the desirable properties can be easily manipulated by the miner. The theorem also shows that the manipulation will lead to strictly zero rebate. Thus, in the next section, we propose Robust TFRM (\rtfrm), which relaxes IR$_\mathbf{M}$, to ensure a positive rebate even with miner manipulation.


\section{\rtfrm: A TFRM Robust to Miner Manipulation\label{sec::tfrm-nonir}}

To ensure strictly positive RRI, we compromise on IR$_\mathbf{M}$, i.e., the utility of an honest miner may be negative. However, we ensure that when the miner is strategic, it can always guarantee itself a non-zero utility. We denote such a TFRM that is resilient to miner manipulation by Robust TFRM (\rtfrm). Designing \rtfrm\  involves constructing an appropriate rebate function. We focus on a linear rebate function that maximizes the worst-case redistribution index RRI while ensuring IR$_u$. We still want RUIC; hence we use the rebate function as given in Theorem \ref{thm:order}. The proofs of the results presented in this section are deferred to Appendix~\ref{app::section4}.

\paragraph{IR$_u$ Constraints.} Each included user must have a non-negative utility, i.e., $u_i \ge 0, \forall i\in I$. W.l.o.g., we assume that $b_1 \ge b_2 \ge \ldots b_n$ and IR for user $n$ is ensured when $r_n \ge 0$ as $u_n  = r_n$. In Claim~\ref{claim:ir-nonir-tfrm}, we show that $r_i \geq 0, \forall i$ if $r_n \ge 0$; hence \rtfrm\ will be IR$_u$.

\begin{claim}
\label{claim:ir-nonir-tfrm}
\rtfrm\ with $n$ included transactions and rebates $(r_1,\ldots,r_n)$ is user IR (IR$_u$) if $r_n \ge 0$.
\end{claim}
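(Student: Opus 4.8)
The plan is to reduce the IR$_u$ requirement to the single inequality $r_i \ge 0$ for every included user, and then to show that user $n$ (the lowest included bid) receives the \emph{smallest} rebate, so that the hypothesis $r_n \ge 0$ propagates to all included users.

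First I would compute utilities under the payment rule of Figure~\ref{fig:TFRM_framework}, using truthful bids $b_i = \theta_i$ (legitimate for included users since \rtfrm\ is RUIC by Theorem~\ref{thm:order}). For a confirmed user $i \in C$ the payment is $p_i = b_{k+1} - r_i$, so by Eq.~\ref{eqn::util},
\[
u_i = \theta_i - (b_{k+1} - r_i) = (b_i - b_{k+1}) + r_i .
\]
Since the bids are sorted and $i \le k < k+1$, we have $b_i \ge b_{k+1}$, hence $u_i \ge r_i$. For an included-but-unconfirmed user $j \in I \setminus C$ we have $x_j^C = 0$ and $p_j = -r_j$, so $u_j = r_j$. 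Thus if $r_i \ge 0$ for all $i \in I$ then $u_i \ge 0$ for every included user and \rtfrm\ is IR$_u$; this reduces the claim to proving $r_i \ge 0$ for all $i$.

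Next I would establish $r_n = \min_{i \in I} r_i$. Writing $r_i = g(b_1,\ldots,b_{i-1},b_{i+1},\ldots,b_n)$ as in Theorem~\ref{thm:order}, the input profiles for consecutive users $i$ and $i+1$ agree in every coordinate except one: user $i$'s profile carries $b_{i+1}$ exactly where user $i+1$'s profile carries $b_i$. For the linear rebate of Definition~\ref{def:lr} this gives the telescoping identity $r_{i+1} - r_i = c_i\,(b_i - b_{i+1})$, whence $r_i - r_n = -\sum_{j=i}^{n-1} c_j\,(b_j - b_{j+1})$. Since $b_j \ge b_{j+1}$, every summand is non-positive once the coefficients $c_j$ are non-positive, giving $r_i \ge r_n$. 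Equivalently and more robustly, the argument vector for $r_n$, namely $(b_1,\ldots,b_{n-1})$, coordinatewise dominates the argument vector of any $r_i$, so whenever $g$ is non-increasing in each input we get $r_n \le r_i$. Combining with $r_n \ge 0$ yields $r_i \ge r_n \ge 0$ for all $i$, which by the reduction proves IR$_u$.

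The utility bookkeeping of the first step is routine. The main obstacle is the monotonicity step: certifying that $r_n$ is the minimal rebate is not automatic for an arbitrary UIC rebate function and must be derived from the specific coefficient (sign) structure of the linear rebate adopted by \rtfrm\ --- equivalently from $g$ being non-increasing in others' bids --- rather than from RUIC alone. Once that structure is in place the telescoping (or dominance) argument closes the claim immediately, and as a useful by-product it collapses the $n$ separate IR$_u$ constraints to the single constraint $r_n \ge 0$ used in the subsequent linear program.
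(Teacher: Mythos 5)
Your first step --- reducing IR$_u$ to $r_i \ge 0$ for every included $i$ via $u_i = (b_i - b_{k+1}) + r_i \ge r_i$ for confirmed users and $u_j = r_j$ for price-setting users --- is correct and matches what the claim implicitly relies on. The gap is in your second step. You try to close the argument by showing $r_n = \min_{i} r_i$ on each fixed bid profile, which you correctly note requires the coefficients $c_j$ to be non-positive (equivalently, $g$ non-increasing in each argument). That hypothesis is unavailable here: the claim is invoked to justify replacing $n$ IR$_u$ constraints by the single constraint $r_n \ge 0$ in the linear program of Figure~\ref{fig:tfrm-lp}, where the $c_j$ are free decision variables of unknown sign. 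Worse, the hypothesis is \emph{false} at the optimum: Theorem~\ref{thm:nonir-tfrm} gives $c_k^* = k/n > 0$, under which $r_i = \tfrac{k}{n}b_{k+1}$ for $i \le k$ and $r_i = \tfrac{k}{n}b_k$ for $i > k$, so $r_n$ is the \emph{largest} rebate, not the smallest. And the pointwise implication you are after fails in general: with $n=3$, $c_1 = 1$, $c_2 = -2$ and profile $(10,3,3)$ one gets $r_3 = 4 \ge 0$ but $r_1 = -3 < 0$.

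The correct reading of the claim, and the paper's proof, is not pointwise but quantified over all bid profiles: the LP imposes $r_n \ge 0$ \emph{for every} $\mathbf{b} \ge 0$, and this universal constraint forces $r_i \ge 0$ for every $i$ and every profile. The mechanism is anonymity plus a profile-renaming argument rather than monotonicity: if $r_i(\mathbf{b}) = g(b_1,\ldots,b_{i-1},b_{i+1},\ldots,b_n) < 0$ for some sorted $\mathbf{b}$ and some $i<n$, construct $\mathbf{b}'$ by deleting $b_i$ and appending a $0$, i.e., $\mathbf{b}' = (b_1,\ldots,b_{i-1},b_{i+1},\ldots,b_n,0)$. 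This is still a valid sorted non-negative profile, $\mathbf{b}'_{-n} = \mathbf{b}_{-i}$, and hence $r_n(\mathbf{b}') = r_i(\mathbf{b}) < 0$, contradicting the constraint on the profile $\mathbf{b}'$. In the example above this is exactly what happens: $r_1(10,3,3) = g(3,3) = r_3(3,3,0) = -3$, so the coefficient choice is excluded by the universally quantified constraint even though $r_3(10,3,3) \ge 0$. You should replace your telescoping/dominance step with this substitution argument; the rest of your write-up can stand.
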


\paragraph{Approx-IR$_\mathbf{M}$ Constraints.} In any classic RM, to ensure IR$_\mathbf{M}$, there is an additional constraint to ensure that the total rebate is less than the VCG payments, i.e., $\sum_i r_i \leq k\cdot b_{k+1}$. We modify this constraint to ensure $\sum_i r_i \leq k\cdot b_k$. This change is because a strategic miner can manipulate the VCG auction to insert a fake bid $\hat{b}_{k+1} = b_k$. Figure~\ref{fig:tfrm-lp} describes the linear program to solve for such a rebate while maximizing the RRI fraction $f$.

\begin{figure}[t]
\noindent\fbox{\parbox{\columnwidth}{
\noindent\textbf{Maximize:} $f$ \quad \quad \textcolor{blue}{$\triangleright$\texttt{RRI}} \\
\noindent\textbf{Subject to:} For every $\mathbf{b}\geq 0,$
\begin{align*}
    r_n \ge 0 & \textcolor{blue}{\texttt{~$\triangleright$IR}_u}\\
   \sum_{i=1}^n r_i \leq k\cdot b_k  & \textcolor{blue}{\texttt{~$\triangleright$Approx-IR}_m}\\
 \noindent\sum_{i=1}^k r_i \ge f \cdot k\cdot b_{k+1}  &\textcolor{blue}{\texttt{~$\triangleright$Worst-case Fraction}} \\
\mbox{where~}  r_i = c_0+c_1b_1+\ldots+c_{i-1}b_{i-1}  
     & + c_i b_{i+1} + \ldots + c_{n-1}b_n
\end{align*}
}}
    \caption{\rtfrm: Linear Program for Rebate Function with Approx-IR$_m$}
    \label{fig:tfrm-lp}
\end{figure}

\begin{figure}
\noindent\fbox{\parbox{\columnwidth}{
\noindent\textbf{Maximize:} $f$ \\
\noindent\textbf{Subject To:} $\sum_{j=k}^i c_j \geq f, ~\forall i \in \{k, \ldots, n-1\}$
\begin{align*}
    (n-k)\cdot c_k \leq k &\mbox{~\textbf{and}~}   n \sum_{j=k}^{n-1}c_j \leq k \\
    n\sum_{j=k}^{k+i-1}c_j + (n-k-i)&\cdot c_{k+i} \leq  k,~\forall i\in [n-k-1]
\end{align*}
}}
    \caption{\rtfrm: Linear Program Independent of the bid vector $\mathbf{b}$ }
    \label{fig::nonir-tfrm}
\end{figure}

We now aim to write the linear program in Figure~\ref{fig:tfrm-lp}  such that it is only dependent on $n, k, c_i's$ and independent of the bid vector, $\mathbf{b}$. For this purpose, we now state the following claims.

\begin{claim}
    \label{claim:12k-nonir-tfrm}
    If $c_0, \ldots, c_{n-1}$ satisfy IR$_u$ and Approx-IR$_\mathbf{M}$, then $c_i = 0$ for $i = 0, \ldots, k-1$.
\end{claim}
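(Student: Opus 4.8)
The plan is to exploit the fact that both the IR$_u$ constraint $r_n \ge 0$ and the Approx-IR$_\mathbf{M}$ constraint $\sum_{i=1}^n r_i \le k\,b_k$ must hold for \emph{every} ordered non-negative bid vector, and that it suffices to test them on the extreme rays of the ordered cone $\{b_1 \ge \dots \ge b_n \ge 0\}$. Concretely, I would evaluate everything on the family $\mathbf{b}^{(\ell,t)}$ defined by $b_1 = \dots = b_\ell = t$ and $b_{\ell+1} = \dots = b_n = 0$ for $t > 0$ and $0 \le \ell \le n$, since these vectors generate the cone. Writing $S_\ell := \sum_{j=1}^\ell c_j$ with $S_0 = 0$, the first step is the bookkeeping computation of the linear rebate on $\mathbf{b}^{(\ell,t)}$: by Theorem~\ref{thm:order} and Definition~\ref{def:lr}, a bidder $i \le \ell$ sees the sorted list of the other bids as $\ell-1$ copies of $t$ followed by zeros, giving $r_i = c_0 + t\,S_{\ell-1}$, while a bidder $i > \ell$ sees $\ell$ copies of $t$, giving $r_i = c_0 + t\,S_\ell$. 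Care is needed here only to handle the ties correctly, which anonymity makes harmless.

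Next I would read off the two constraints. For $1 \le \ell \le n-1$ user $n$ is a zero-bidder, so IR$_u$ gives $c_0 + t\,S_\ell \ge 0$ for all $t>0$; evaluating at $\mathbf{b}=\mathbf{0}$ and letting $t \to \infty$ yields $c_0 \ge 0$ and $S_\ell \ge 0$ for all such $\ell$. Summing the per-user expressions gives $\sum_{i=1}^n r_i = n\,c_0 + t\big[(n-\ell)\,S_\ell + \ell\,S_{\ell-1}\big]$. Now I restrict to $1 \le \ell \le k-1$, where the $k$-th coordinate of $\mathbf{b}^{(\ell,t)}$ is $0$, so Approx-IR$_\mathbf{M}$ reads $n\,c_0 + t\big[(n-\ell)\,S_\ell + \ell\,S_{\ell-1}\big] \le 0$ for all $t>0$. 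Evaluating at $\mathbf{b}=\mathbf{0}$ gives $c_0 \le 0$, hence $c_0 = 0$; letting $t \to \infty$ gives $(n-\ell)\,S_\ell + \ell\,S_{\ell-1} \le 0$.

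The final step is purely sign-based: for $1 \le \ell \le k-1$ we have $n-\ell > 0$, $\ell > 0$, and $S_\ell, S_{\ell-1} \ge 0$ from the IR$_u$ step, so a non-negative combination being $\le 0$ forces $S_\ell = S_{\ell-1} = 0$. Thus $S_0 = S_1 = \dots = S_{k-1} = 0$, and since $c_j = S_j - S_{j-1}$ this gives $c_1 = \dots = c_{k-1} = 0$; together with $c_0 = 0$ this is exactly the claim. I expect the only real obstacle to be the first step — correctly evaluating the anonymous linear rebate on tied bid vectors and tracking the index shift inherent in ``omit your own bid'' — after which the argument is a short two-sided squeeze that pins every partial sum $S_\ell$ to zero. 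For the degenerate case $k=1$ the index range is empty and only $c_0 = 0$ must be shown, which both constraints give directly at $\mathbf{b}=\mathbf{0}$.
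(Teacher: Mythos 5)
Your proof is correct and follows essentially the same route as the paper's: both evaluate IR$_u$ and Approx-IR$_\mathbf{M}$ on the step bid vectors with $\ell<k$ equal nonzero entries (so that $b_k=0$ annihilates the right-hand side of Approx-IR$_\mathbf{M}$) and squeeze the coefficients between the two resulting inequalities. The only cosmetic difference is that the paper argues by contradiction via the minimal index $j$ with $c_j\neq 0$, whereas you pin down all partial sums $S_\ell$ simultaneously; the substance is identical.
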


\begin{claim}
    \label{claim:iru-nonir-tfrm}
    The IR$_u$ constraint $r_n \ge 0$ and the worst-case fraction constraint (refer Figure~\ref{fig:tfrm-lp}) is equivalent to having $\sum_{j=k}^i c_j \geq f,~\forall i \in \{k, \ldots, n-1\}$.
\end{claim}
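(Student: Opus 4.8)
The plan is to eliminate the quantifier ``for every $\mathbf{b}\ge 0$'' in both constraints of Figure~\ref{fig:tfrm-lp} by exploiting the conic structure of the sorted bid domain, after first simplifying the rebate via Claim~\ref{claim:12k-nonir-tfrm}.

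First I would substitute $c_0 = c_1 = \dots = c_{k-1} = 0$ (Claim~\ref{claim:12k-nonir-tfrm}) into the linear rebate of Definition~\ref{def:lr} and compute the two rebates that actually appear in the constraints. For a confirmed user $i \in \{1,\dots,k\}$, every surviving coefficient $c_k,\dots,c_{n-1}$ has index at least $k \ge i$, so by the index shift in Definition~\ref{def:lr} it multiplies $b_{t+1}$; hence
\[
r_i \;=\; \sum_{t=k}^{n-1} c_t\, b_{t+1}, \qquad \forall\, i \le k,
\]
which is \emph{the same for all confirmed users}. Consequently $\sum_{i=1}^k r_i = k\sum_{t=k}^{n-1} c_t b_{t+1}$, and the worst-case-fraction constraint $\sum_{i=1}^k r_i \ge f\cdot k\cdot b_{k+1}$ reduces (dividing by $k$) to $\sum_{t=k}^{n-1} c_t b_{t+1} \ge f\, b_{k+1}$. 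Similarly, for user $n$ the shifted block is empty, so $r_n = \sum_{t=k}^{n-1} c_t b_t$, and IR$_u$ ($r_n \ge 0$) becomes $\sum_{t=k}^{n-1} c_t b_t \ge 0$. Both are now linear inequalities in $\mathbf{b}$ required to hold for every sorted $b_1 \ge \dots \ge b_n \ge 0$.

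The crux is a single observation about this domain: writing $d_s = b_s - b_{s+1} \ge 0$ for $s < n$ and $d_n = b_n \ge 0$, every sorted $\mathbf{b} \ge 0$ decomposes as $\mathbf{b} = \sum_{s=1}^n d_s\, v^{(s)}$, where $v^{(s)} = (\underbrace{1,\dots,1}_{s},0,\dots,0)$. Thus the sorted nonnegative cone is generated by the $v^{(s)}$, so any linear inequality $L(\mathbf{b}) \ge 0$ holds on the whole cone iff it holds at each generator $v^{(s)}$, $s=1,\dots,n$. I would then evaluate the two reduced inequalities at $\mathbf{b} = v^{(s)}$. For the worst-case-fraction inequality this yields $\sum_{t=k}^{s-1} c_t - f \ge 0$ for $s \ge k+1$ (and the trivial $0 \ge 0$ for $s \le k$); reindexing $i = s-1$ gives exactly $\sum_{j=k}^{i} c_j \ge f$ for all $i \in \{k,\dots,n-1\}$. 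For IR$_u$ the same evaluation gives $\sum_{t=k}^{s} c_t \ge 0$, i.e.\ $\sum_{j=k}^{i} c_j \ge 0$ for $i \in \{k,\dots,n-1\}$.

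Finally, I would note that these IR$_u$ inequalities are implied by the worst-case-fraction inequalities whenever $f \ge 0$, which holds at the LP optimum since $f = 0$ (all $c_j = 0$) is always feasible. Hence the conjunction of the IR$_u$ and worst-case-fraction constraints collapses to the single family $\sum_{j=k}^{i} c_j \ge f$, $\forall i \in \{k,\dots,n-1\}$, as claimed. I expect the main obstacle to be the clean justification of the conic/extreme-ray reduction — in particular, verifying that checking the finitely many generators $v^{(s)}$ is both necessary and sufficient — whereas the rebate computations and the final redundancy argument are routine bookkeeping.
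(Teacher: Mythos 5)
Your proof is correct and follows the same underlying route as the paper: both reduce the universally quantified linear inequality over sorted nonnegative bid vectors to nonnegativity of prefix sums of the coefficients. The paper gets this reduction by citing Lemma~1 of \citet{Guo07} as a black box, whereas you re-derive it by decomposing the sorted cone into the generators $v^{(s)}=(1,\dots,1,0,\dots,0)$ and checking the inequality on each generator --- which is precisely the proof of that lemma, so your version is self-contained but not materially different. Two points where you are actually more careful than the paper: (i) you verify explicitly that $r_i$ is identical for every confirmed user $i\le k$ once $c_0=\dots=c_{k-1}=0$, which is what lets the worst-case constraint be divided through by $k$; and (ii) you address why the IR$_u$ family $\sum_{j=k}^i c_j\ge 0$ is subsumed by $\sum_{j=k}^i c_j\ge f$, namely that $f\ge 0$ without loss of generality since $f=0$ is feasible --- the paper's proof silently drops the IR$_u$ constraint, and the claimed ``equivalence'' is only true under that observation. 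Your anticipated obstacle (justifying that checking the finitely many generators is necessary and sufficient) is genuinely the crux, but it is standard: sufficiency is linearity plus the nonnegative combination $\mathbf{b}=\sum_s d_s v^{(s)}$, and necessity is immediate since each $v^{(s)}$ lies in the domain.
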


\begin{claim}
    \label{claim:irmwc-nonir-tfrm}
    The Approx-IR$_\mathbf{M}$ constraint can be replaced by: 
    \begin{align*}
    (n-k)c_k \leq k\mbox{~\textbf{and}~} &
    n \cdot\sum_{j=k}^{n-1}c_j \leq k \mbox{~\textbf{and}~} \\
 n\sum_{j=k}^{k+i-1}c_j + (n-k-i) c_{k+i} \leq &  k,~ 
        i\in \{1, \ldots, n-k-1\}
    \end{align*}
\end{claim}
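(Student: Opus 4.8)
The plan is to translate the Approx-IR$_\mathbf{M}$ constraint $\sum_{i=1}^n r_i \le k \cdot b_k$, which must hold \emph{for every} bid vector $\mathbf{b} \ge 0$ with $b_1 \ge b_2 \ge \ldots \ge b_n$, into a finite set of constraints on the coefficients $c_j$ alone. By Claim~\ref{claim:12k-nonir-tfrm} we already know $c_0 = c_1 = \ldots = c_{k-1} = 0$, so each rebate $r_i$ is a linear combination only of the bids $b_{k+1}, \ldots, b_n$ (with the convention that $r_i$ omits its own bid $b_i$). First I would write out $\sum_{i=1}^n r_i$ explicitly and collect the coefficient of each remaining bid $b_{k+\ell}$. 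The key observation is that the quantity $\sum_i r_i - k\cdot b_k$ is a linear function of $\mathbf{b}$; a linear function is nonpositive over the \emph{cone} of admissible (sorted, nonnegative) bid vectors if and only if it is nonpositive at every extreme ray of that cone.

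Second, I would identify those extreme rays. The ordered region $\{b_1 \ge b_2 \ge \ldots \ge b_n \ge 0\}$ is a polyhedral cone whose extreme rays are the ``staircase'' vectors $v^{(t)} = (\underbrace{1,\ldots,1}_{t}, 0, \ldots, 0)$ for $t = 1, \ldots, n$, i.e. the vector with its first $t$ entries equal to $1$ and the rest $0$. So verifying $\sum_i r_i \le k \cdot b_k$ for all admissible $\mathbf{b}$ reduces to checking it on each $v^{(t)}$. Evaluating the left-hand side on $v^{(t)}$ picks out, for each rebate $r_i$, the sum of the coefficients attached to those bids among $b_{k+1}, \ldots, b_n$ that are set to $1$; evaluating the right-hand side gives $k$ when $t \ge k$ and $k\cdot 0 = 0$ otherwise. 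Because the rebate depends only on bids from index $k+1$ onward, the rays $v^{(t)}$ with $t \le k$ either make both sides vanish or are already handled, and the binding constraints come from $t \in \{k, k+1, \ldots, n\}$.

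Third, I would carry out the bookkeeping for each such $t$. For $t = k$ the only nonzero rebate term on the right positions comes through the shifted indexing, yielding the constraint $(n-k)c_k \le k$; here the factor $(n-k)$ counts the confirmed users whose rebate formula references $b_{k+1}$ via coefficient $c_k$. For a general $t = k+i$ with $1 \le i \le n-k-1$, setting the first $k+i$ bids to $1$ makes the confirmed users (who each reference all of $b_{k+1},\ldots$) contribute $\sum_{j=k}^{k+i-1} c_j$ while the included-but-unconfirmed users whose own bid index exceeds $k+i$ contribute the extra term $(n-k-i)c_{k+i}$, giving exactly $n\sum_{j=k}^{k+i-1} c_j + (n-k-i)c_{k+i} \le k$. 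Finally $t = n$ (all bids $1$) gives $n\sum_{j=k}^{n-1} c_j \le k$. Conversely, since every admissible $\mathbf{b}$ is a nonnegative combination of the $v^{(t)}$, these finitely many inequalities imply the original constraint, establishing equivalence.

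The main obstacle I expect is the careful index bookkeeping in the third step: because the anonymous rebate $r_i$ omits user $i$'s own bid, the coefficient that multiplies a fixed bid $b_{k+\ell}$ differs across the rebate formulas $r_1, \ldots, r_n$ depending on whether $i$ lies before or after position $k+\ell$, and this index shift is precisely what produces the asymmetric term $(n-k-i)c_{k+i}$ separate from the summed term $n\sum_{j=k}^{k+i-1}c_j$. Getting the counts of how many rebate formulas see a given coefficient at which shifted position — and in particular correctly accounting for the users whose own bid is one of the $t$ entries set to $1$ versus those whose bid is zeroed out — is the delicate part; everything else is a routine extreme-ray argument for nonpositivity of a linear functional over a polyhedral cone.
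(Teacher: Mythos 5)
Your proposal is correct and is essentially the paper's own argument: the paper simply cites Lemma~1 of Guo and Conitzer (a linear functional is nonnegative on the cone $\{b_1\ge\cdots\ge b_n\ge 0\}$ iff its partial coefficient sums are nonnegative), which is exactly your extreme-ray test on the staircase vectors $v^{(t)}$, and your bookkeeping reproduces the same three families of inequalities. (One small narrative slip: on $v^{(k)}$ the factor $n-k$ arises from the $n-k$ \emph{unconfirmed} users, for whom $c_k$ multiplies $b_k$, not from the confirmed users referencing $b_{k+1}$ --- which is zero on that ray; the resulting inequality $(n-k)c_k\le k$ is nonetheless correct.)
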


Using Claims \ref{claim:iru-nonir-tfrm} and \ref{claim:irmwc-nonir-tfrm}, we reformulate the linear program in Figure~\ref{fig:tfrm-lp}  so that it is independent of the bid vectors. Figure~\ref{fig::nonir-tfrm} presents this reformulated LP.

\paragraph{Optimal worst-case Redistribution Fraction.} We next provide the analytical solution to the linear program in Figure~\ref{fig::nonir-tfrm} and thereby also state the optimal worst-case fraction redistributed.

\begin{theorem}
    \label{thm:nonir-tfrm}
    For any $n$ and $k$ such that $n\ge k+2$, the \rtfrm\ mechanism is unique. The fraction redistributed to the top-k users in the worst-case is given by:
    $f^{*} = \frac{k}{n}.$
    In \rtfrm, the rebate function is characterized by the following:
    $c_k^{*} = \frac{k}{n}$ and $c_i^{*} = 0,~ \forall{i\neq k}.$
\end{theorem}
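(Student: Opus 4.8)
The plan is to treat Figure~\ref{fig::nonir-tfrm} as an explicit, bid-independent linear program in the variables $c_k,\ldots,c_{n-1}$ (Claim~\ref{claim:12k-nonir-tfrm} has already forced $c_0=\cdots=c_{k-1}=0$) and to solve it by hand in three steps: exhibit a feasible point attaining $f=k/n$, prove a matching upper bound, and then argue that the optimizer is forced to be unique. Throughout I will abbreviate the cumulative partial sums by $S_i := \sum_{j=k}^{i} c_j$, so that the first family of constraints reads $S_i \ge f$ for every $i\in\{k,\ldots,n-1\}$, the constraint $n\sum_{j=k}^{n-1}c_j\le k$ reads $n S_{n-1}\le k$, and each remaining worst-case/Approx-IR$_\mathbf{M}$ constraint $n\sum_{j=k}^{k+i-1}c_j + (n-k-i)c_{k+i}\le k$ becomes $n S_{k+i-1} + (n-k-i)c_{k+i}\le k$.

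First I would verify feasibility of the claimed solution: setting $c_k = k/n$ and $c_i=0$ for $i\ne k$ makes $S_i = k/n$ for every $i\ge k$, so all constraints are satisfied (many with equality), and the objective value $f=k/n$ is attained. The matching upper bound is then immediate: combining $S_{n-1}\ge f$ (first family at $i=n-1$) with $n S_{n-1}\le k$ gives $f \le S_{n-1}\le k/n$. Hence $f^\star = k/n$.

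The crux, and the step I expect to demand the most care, is uniqueness. The key manoeuvre is to substitute $c_{k+i}=S_{k+i}-S_{k+i-1}$ into each constraint of the third family, which collapses it into the weighted-average form $(k+i)\,S_{k+i-1} + (n-k-i)\,S_{k+i} \le k$ for $i\in\{1,\ldots,n-k-1\}$. Since $(k+i)+(n-k-i)=n$, this asserts that a convex combination of $S_{k+i-1}$ and $S_{k+i}$ is at most $k/n$. But at the optimum the first family forces $S_{k+i-1},S_{k+i}\ge k/n$, and a convex combination of two quantities each $\ge k/n$ can be $\le k/n$ only when both equal $k/n$; here I use the hypothesis $n\ge k+2$ precisely to guarantee that the weight $n-k-i$ is strictly positive across the entire index range $i\le n-k-1$. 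Sweeping $i$ from $1$ to $n-k-1$ therefore pins every partial sum, giving $S_k=S_{k+1}=\cdots=S_{n-1}=k/n$.

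Finally I would read off the coefficients from the partial sums: $c_k = S_k = k/n$, and $c_{k+i}=S_{k+i}-S_{k+i-1}=0$ for all $i\ge 1$. This shows the claimed rebate coefficients are the unique optimizer, and since the \rtfrm\ mechanism is completely determined by its (linear) rebate function, uniqueness of the coefficients yields uniqueness of the mechanism, completing the proof.
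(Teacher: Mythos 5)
Your proof is correct and follows essentially the same route as the paper: verify feasibility of $c_k^\star=k/n$, then pass to the partial sums $S_j=\sum_{i=k}^j c_i$ (the paper's $x_j$) and use the rewritten constraints $(k+i)S_{k+i-1}+(n-k-i)S_{k+i}\le k$ together with $S_j\ge f$ to force every partial sum to equal $k/n$. The only (cosmetic) difference is that you phrase the forcing step as a convex-combination argument applied symmetrically to each constraint, whereas the paper runs a sequential two-sided sandwich $\hat{x}_j\ge x_j^\star$ and $\hat{x}_j\le x_j^\star$; the underlying inequalities are identical.
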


 Observe that the total redistribution to the users, when the miner is honest for \rtfrm, is given by 
 \begin{equation}
     \sum_{i=1}^n r_i = \frac{k}{n} \left[(k\cdot b_{k+1}) + (n-k)b_k\right]
     \label{eq:rebatenonirtfrm}
 \end{equation}
This value may exceed $k\cdot b_{k+1}$, thus violating IR$_\mathbf{M}$. However, it satisfies Approx-IR$_\mathbf{M}$ as $\sum_{i=1}^n r_i \leq k\cdot b_k$ (refer Figure~\ref{fig:tfrm-lp}).  \rtfrm\ is similar to the Bailey-Cavallo mechanism~\cite{Bailey97}. The primary difference is due to the Approx-IR$_\mathbf{M}$ constraint, which makes $c_k = k/n$ instead of $c_{k+1}$ as in Bailey-Cavallo. 


\subsection{\rtfrm: Analyzing Impact of Miner Manipulation on Rebate and Miner Revenue\label{sec::exp}}
With an honest miner, \rtfrm~ maximizes the worst-case redistribution index such that it is AE, RUIC, IR$_u$ and Approx-IR$_\mathbf{M}$. We now analyze the effect of miner manipulation on \rtfrm. Previously, we saw that it is impossible to ensure non-zero RRI (Theorem~\ref{thm::imp_RRI}), but with \rtfrm\, we show that RRI is strictly positive even with miner manipulation.

\paragraph{Reduction in Transaction Fees.} The rebate function for \rtfrm\ is characterized by the constants given in Theorem \ref{thm:nonir-tfrm}. With these, We now calculate RRI (Definition \ref{def:rri}), i.e, $\hat{e}_{\textsf{wc}}$, for \rtfrm. Theorem~\ref{thm:rri-nonir-tfrm} shows that irrespective of miner manipulation, $c^*_k=\frac{k}{n}$ fraction of payments will be returned.


\begin{theorem}
\label{thm:rri-nonir-tfrm}
  Consider $n$ included transactions with the set $C$ as confirmed transactions such that $|C|=k$ with the remaining $n-k$ as price-setting transactions. Irrespective of any miner manipulation, \rtfrm\ ensures strictly positive RRI or $ \hat{e}_{\textsf{wc}} =  c_k^* = \frac{k}{n}$.
\end{theorem}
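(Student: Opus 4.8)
The plan is to evaluate the infimum in Definition~\ref{def:rri} directly, using the explicit rebate coefficients from Theorem~\ref{thm:nonir-tfrm}, namely $c_k^* = k/n$ and $c_i^* = 0$ for all $i \neq k$. The first step is to record what this rebate function produces on an \emph{arbitrary} manipulated profile $\hat{\mathbf{b}}$ (the genuine bids together with the miner's fake bids), sorted as $\hat b_1 \ge \cdots \ge \hat b_n$. Since the only nonzero coefficient is $c_k$, Theorem~\ref{thm:order} gives that each included user's rebate equals $\tfrac{k}{n}$ times the $k$-th largest of the remaining $n-1$ bids. Deleting the bid of a user sitting in a confirmed slot (rank $\le k$) leaves $\hat b_{k+1}$ in the $k$-th position, so every confirmed user receives $r_i = \tfrac{k}{n}\hat b_{k+1}$; deleting a user in a price-setting slot (rank $>k$) leaves $\hat b_{k}$ there, so every such user receives $r_i = \tfrac{k}{n}\hat b_{k}$. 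Specializing to the unmanipulated profile recovers Eq.~\ref{eq:rebatenonirtfrm}.

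The second step establishes the lower bound $\hat e_{\textsf{wc}} \ge k/n$. Let $S$ be the genuine (non-impersonated) included users, partitioned into $S_C$ (confirmed) and $S_P$ (price-setting); the payment the mechanism actually collects from genuine confirmed transactions is the gross VCG charge $\hat b_{k+1}$ levied on each of them. The crucial observation is that a genuine confirmed user returns \emph{exactly} a $k/n$ fraction of its own payment, because its rebate $\tfrac{k}{n}\hat b_{k+1}$ is precisely $\tfrac{k}{n}$ of the $\hat b_{k+1}$ it pays; this ratio is pinned independently of how the miner chooses its fakes, since both the payment and the rebate scale with the same quantity $\hat b_{k+1}$. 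Using $\hat b_k \ge \hat b_{k+1} \ge 0$ and $|S_P|\ge 0$,
\[
\sum_{i \in S} r_i \;=\; \frac{k}{n}\,|S_C|\,\hat b_{k+1} \;+\; \frac{k}{n}\,|S_P|\,\hat b_k \;\ge\; \frac{k}{n}\,|S_C|\,\hat b_{k+1} \;=\; \frac{k}{n}\, p(\cdot;\hat{\mathbf{b}}),
\]
so the ratio in Definition~\ref{def:rri} is at least $k/n$ for every manipulation with $p(\cdot;\hat{\mathbf{b}})\neq 0$, whence the infimum is at least $k/n$.

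The third step exhibits a manipulation attaining the bound, giving $\hat e_{\textsf{wc}} \le k/n$. The miner inserts $n-k$ fake bids valued just below the $k$-th highest genuine bid, so that they occupy exactly the price-setting slots $k+1,\ldots,n$ and displace every lower-ranked genuine transaction out of the block. Then $S_P=\emptyset$, the displayed inequality holds with equality, and the ratio equals $k/n$ while $p(\cdot;\hat{\mathbf{b}})\neq 0$. Combining the two bounds yields $\hat e_{\textsf{wc}} = c_k^* = k/n$, which is strictly positive, as claimed.

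I expect the main obstacle to be the bookkeeping around the manipulated profile rather than any deep inequality: one must argue carefully that inserting fake bids can neither lower the per-user ratio of a genuine confirmed transaction below $k/n$ (it is locked at $k/n$) nor let the miner's fake transactions inflate the numerator, since the rebates paid to fakes flow back to the miner and their self-payments do not count toward the received payment $p(\cdot;\hat{\mathbf{b}})$. I would also verify that the bound-attaining manipulation is feasible --- the mempool contains enough genuine transactions to keep all $k$ confirmed slots genuine --- and that it preserves $p(\cdot;\hat{\mathbf{b}})\neq 0$ so that it is admissible in the infimum.
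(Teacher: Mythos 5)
Your proof is correct and follows essentially the same route as the paper: evaluating the infimum directly with the explicit coefficients $c_k^{*}=k/n$ and $c_i^{*}=0$ for $i\neq k$, so that a genuine confirmed user's rebate and payment both scale with $\hat{b}_{k+1}$ and the ratio is pinned at $k/n$. You are in fact more careful than the paper's own proof, which silently sets $S=\{1,\ldots,k\}$ and computes only that case; your separate lower bound for arbitrary $S$ (using $\hat{b}_k\ge\hat{b}_{k+1}$ to absorb genuine price-setting users) plus the bound-attaining manipulation close that gap.
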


From Theorem~\ref{thm:nonir-tfrm} and Theorem~\ref{thm:rri-nonir-tfrm}, we see that in \rtfrm, the fraction of payments redistributed to the top-k users, i.e., $k/n$, is the same for honest and strategic miner. This implies that \rtfrm\ is resilient to miner manipulation while being worst-case optimal.

\paragraph{Utility of Strategic Miner.} From Theorem \ref{thm:rri-nonir-tfrm}, we know that if a miner is strategic and impersonates the price-setting transactions, the miner will receive positive utility. The miner will preferably set the fake bid $\hat{b}_{k+1}$ close to $b_k$. Hence, the maximum utility to a miner that deviates by impersonating the price-setting bids is: $u_{\mathbf{M}} = \left(1-k/n\right) \cdot k\cdot b_{k}$. As we assume $n\geq k+2$, the miner's maximum utility is minimized for $k=n-2$. 

The fraction redistributed to the genuine users is still $\frac{k}{n}$ of the payments received even when the miner impersonates the confirmed transactions.  We illustrate this with an example next.

\begin{example}\upshape
It is possible for the miner to insert fake transactions with high enough bids such that they are confirmed. Consider $n=5$ and $k=3$ where $b_1 = b_2 = 100$, $b_3 = 10 $ and $b_4 = b_5 = 4$. If the miner is only impersonating the price setting transactions, then it puts $\hat{b}_4 = b_3$ and arbitrary $\hat{b_5}<b_3$,  then its overall utility is $\left ( 1 - \frac{k}{n} \right) k b_k = 12$. Whereas if the miner is given more flexibility to insert a fake transaction within the confirmed and unconfirmed bids, it receives more payments. For e.g., let $\hat{b}_1 = 200$ and $\hat{b}_3 = 100$ hence the ordered bids are $\hat{b}_1 \ge b_1 \ge b_2 \ge \hat{b}_3$. Therefore effectively, the first two transactions are confirmed and pay $100$ each.  Further, due to \rtfrm, it returns a rebate of $\frac{k}{n} 100$ to each of the two users, thus obtaining an overall utility of $\left( 1 - 3/5\right) 200 = 80$. 
\end{example}

\section{\rrtfrm: Robust and Rational TFRM}
\label{sec:rrtfrm}
\rtfrm\ compromises Miner IR to ensure positive RRI. We now introduce randomness in \rtfrm\ to obtain a mechanism that ensures positive utility to an honest miner, i.e., satisfies $IR_m$. Towards this, we propose Robust and Rational TFRM, \rrtfrm\ (Figure \ref{fig:RRTFRM_framework}). In \rrtfrm, the rebate is not guaranteed for every included transaction. Instead, an included transaction gets a rebate with probability $\alpha$, $\alpha \in [0, 1]$ where the rebate value is calculated using \rtfrm. Hence \rrtfrm\ reduces to \rtfrm\ when $\alpha=1$. On the other extreme, when $\alpha=0$, \rrtfrm\ reduces to a second price auction.



\paragraph{\rrtfrm: On-chain Randomness.} As stated, each transaction receives a rebate with a probability $\alpha$. Similar to other TFMs~\cite{chung2021foundations}, we employ trusted on-chain randomness for this randomization. Researchers have proposed such trusted randomized protocols using various cryptographic primitives~\cite{bhat2021randpiper,das2022spurt}. Significantly, the miner of the block cannot exert any influence on this randomization.

\begin{figure}[t]
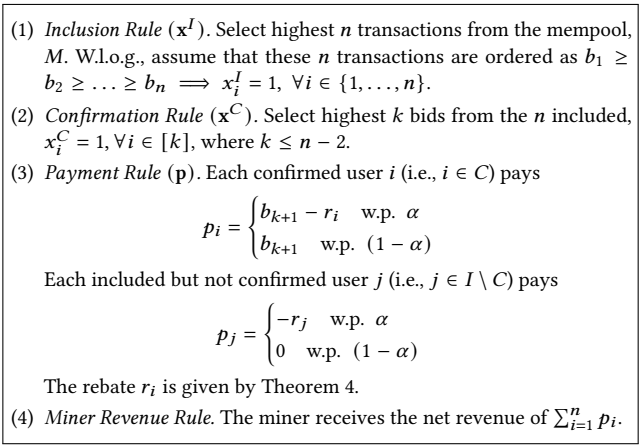

\small
\noindent\fbox{\parbox{0.97\columnwidth}{
\begin{enumerate}[leftmargin=*,itemsep=0.2em]
    \item \textsl{Inclusion Rule $(\mathbf{x}^I)$.} Select highest $n$ transactions from the mempool, $M$. W.l.o.g., assume that these $n$ transactions are ordered as $b_1\geq b_2\geq \ldots \geq b_n \implies x_i^I=1,~\forall i \in \{1,\ldots, n\}$.
    \item \textsl{Confirmation Rule $(\mathbf{x}^C)$.} Select highest $k$ bids from the $n$ included, $x_i^C=1,\forall i \in [k]$, where $k\le n-2$.
    \item \textsl{Payment Rule $(\mathbf{p})$.} Each confirmed user $i$ (i.e., $i \in C $) pays 
    \begin{equation*}
        p_i = 
        \begin{cases}
            b_{k+1} - r_i \quad \mbox{w.p.}\ \  \alpha \\
            b_{k+1} \quad \mbox{w.p.}\  \ (1-\alpha)
        \end{cases}
    \end{equation*}
     Each included but not confirmed user $j$ (i.e., $j\in I\setminus C$)  pays 
    \begin{equation*}
        p_j = 
        \begin{cases}
             - r_j \quad \mbox{w.p.}\ \  \alpha \\
            0 \quad \mbox{w.p.}\  \ (1-\alpha)
        \end{cases}
    \end{equation*}
    The rebate $r_i$ is given by Theorem \ref{thm:nonir-tfrm}.
    \item \textsl{Miner Revenue Rule.} The miner receives the net revenue of $\sum_{i=1}^n p_i$.
\end{enumerate}
}}
\caption{\label{fig:RRTFRM_framework} \rrtfrm: Robust and Rational Transaction Fees Redistribution Mechanism.}
\end{figure}

\paragraph{\rrtfrm: Incentive and RRI Guarantees.} Theorem \ref{thm:rrtfrm} proves that \rrtfrm~ mimics the incentive guarantees of \rtfrm. Moreover, for an appropriate $\alpha$, \rrtfrm\ is also IR$_\mathbf{M}$ in expectation. The proofs for the results in this section are given in Appendix \ref{app::sec7}.

\begin{theorem}
\label{thm:rrtfrm}
    For any $n$ and $k$ such that $n\geq k+2$ and any bid profile $\mathbf{b} = (b_1, \ldots, b_n)$, and probability $\alpha \in (0,1)$ \rrtfrm\ has an expected redistribution fraction (expectation over $\alpha$) $f^* = \alpha \cdot \frac{k}{n}$. Further it satisfies $AE$, $RUIC$, $IR_u$, and is $IR_m$ when $\alpha \leq \overline{\alpha}=\frac{n}{k+ (n-k)b_k/b_{k+1}} $. 
\end{theorem}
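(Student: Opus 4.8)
The plan is to exploit that \rrtfrm\ uses the \emph{same} deterministic allocation as \rtfrm\ and applies the \rtfrm\ rebate of Theorem~\ref{thm:nonir-tfrm} via a bid-independent, miner-uncontrollable coin of bias $\alpha$; consequently every property except $IR_\mathbf{M}$ is inherited from \rtfrm, while $IR_\mathbf{M}$ collapses to a single inequality in $\alpha$. First I would record the explicit rebates implied by Theorem~\ref{thm:nonir-tfrm} (where $c_k^*=k/n$ and all other $c_i^*=0$): $r_i=\tfrac{k}{n}b_{k+1}$ for each confirmed $i\in C$ and $r_i=\tfrac{k}{n}b_k$ for each price-setting $i\in I\setminus C$, with the aggregate given in Eq.~\ref{eq:rebatenonirtfrm}. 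Since each included user receives its \rtfrm\ rebate with probability $\alpha$, linearity of expectation gives expected redistribution to the top-$k$ users equal to $\alpha\sum_{i\in C}r_i=\alpha\,\tfrac{k}{n}\,k\,b_{k+1}$, i.e.\ an expected fraction $\alpha\cdot k/n$ of the VCG surplus $k\,b_{k+1}$; this settles the redistribution-fraction claim.

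Next I would dispatch $AE$, $RUIC$, and $IR_u$ by conditioning on the coin. In the realization where the rebate is paid, \rrtfrm\ coincides exactly with \rtfrm; in the complementary realization it coincides with the second-price TFM of Example~\ref{ex1}. The confirmation rule is identical in both and selects the top $k$ bids, so the allocation---hence $AE$---is unaffected by the randomization. For $RUIC$ I would use that, by Theorem~\ref{thm:order}, the rebate value is independent of a user's own bid; writing the expected utility of an included user $i$ bidding truthfully as $\mathbbm{1}_{x_i^C=1}(\theta_i-b_{k+1})+\alpha r_i$ makes the incentive transparent, since the $\alpha r_i$ term is constant in $b_i$ for fixed $\mathbf{b}_{I\setminus i}$ and the remaining term is the familiar second-price payoff maximized at $b_i=\theta_i$. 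For $IR_u$ I would verify non-negativity in each coin realization: a truthful confirmed user has $\theta_i\ge b_{k+1}$ and $r_i\ge 0$, so its realized utility is $\theta_i-b_{k+1}$ or $\theta_i-b_{k+1}+r_i$, both $\ge 0$; a price-setting user has realized utility $0$ or $r_i=\tfrac{k}{n}b_k\ge 0$. Thus $RUIC$ and $IR_u$ hold ex post and a fortiori in expectation.

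The crux is $IR_\mathbf{M}$. Here I would compute the honest miner's expected revenue directly,
\[
\mathbb{E}\Big[\,\textstyle\sum_{i=1}^n p_i\,\Big]
= \sum_{i\in C}\big(b_{k+1}-\alpha r_i\big)+\sum_{j\in I\setminus C}\big(-\alpha r_j\big)
= k\,b_{k+1}-\alpha\sum_{i=1}^n r_i,
\]
and then substitute the aggregate rebate from Eq.~\ref{eq:rebatenonirtfrm}. Requiring non-negativity yields $k\,b_{k+1}\ge \alpha\,\tfrac{k}{n}\big[k\,b_{k+1}+(n-k)b_k\big]$, and solving for $\alpha$ gives precisely $\alpha\le\overline{\alpha}=\frac{n}{k+(n-k)b_k/b_{k+1}}$. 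I would close by noting that $b_k\ge b_{k+1}$ forces $k+(n-k)b_k/b_{k+1}\ge n$, hence $\overline{\alpha}\le 1$, so the admissible range $\alpha\in(0,\overline{\alpha}]$ is nonempty and consistent with $\alpha\in(0,1)$. The main obstacle is not any single hard step but keeping the $IR_\mathbf{M}$ bookkeeping exact---accounting that confirmed and price-setting users contribute different rebates and that only the rebate, not the $b_{k+1}$ charge, is discounted by $\alpha$---so that the derived threshold matches $\overline{\alpha}$ exactly; the efficiency and incentive claims are comparatively routine once the conditioning-on-the-coin viewpoint is adopted.
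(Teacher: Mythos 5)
Your proposal is correct and follows essentially the same route as the paper: condition on the coin, inherit AE and IR$_u$ from the two realizations (\rtfrm\ and the second-price TFM), and reduce IR$_\mathbf{M}$ to the single inequality $k\,b_{k+1}\ge \alpha\,\tfrac{k}{n}\bigl[k\,b_{k+1}+(n-k)b_k\bigr]$, which yields exactly $\overline{\alpha}$. The only stylistic difference is your RUIC step, which packages the paper's explicit overbid/underbid case analysis into the cleaner observation that utility decomposes as a Vickrey payoff plus a bid-independent term $\alpha r_i$; this is the same underlying idea the paper states in its proof sketch.
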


\begin{proofsketch}
We divide the proof into two parts.
\begin{itemize}[leftmargin=*]
    \item Firstly, notice that \rrtfrm\ is AE since the top $k$ bids are confirmed and IR$_u$, since payments are governed by VCG and all the rebates, are positive. Although in \rrtfrm\, every user is not deterministically given rebates, so it is RUIC in expectation, where the expectation is over $\alpha$. This is because the rebate a user receives is independent of its own bid.

   \item Next, we now formally show that \rrtfrm\ is IR$_\mathbf{M}$ in expectation. The payment obtained by an honest miner is given by $kb_{k+1}$. The expected rebate paid by the honest miner is given by $\alpha \left[k \frac{k}{n} b_{k+1} + (n-k) \frac{k}{n} b_k\right] $. If $\alpha = 1$, the refund is equal to the refund in \rtfrm\ given by Equation \ref{eq:rebatenonirtfrm}. In order to ensure $IR_m$ for the honest miner, the following must be true,

    \begin{align*}
        kb_{k+1} &\geq \alpha \left[k \frac{k}{n} b_{k+1} + (n-k) \frac{k}{n} b_k\right] \implies
        \alpha &\leq \frac{nb_{k+1}}{ k b_{k+1} + (n-k)b_k}
    \end{align*}
Therefore, \rrtfrm\ satisfies IR$_\mathbf{M}$ when $\alpha =  \frac{nb_{k+1}}{ k b_{k+1} + (n-k)b_k} $ and total rebate is $k b_{k+1}$. 
\end{itemize}
This completes the proof of the theorem.
\end{proofsketch}

We make the following observations based on Theorem~\ref{thm:rrtfrm}.

\noindent\textbf{$b_{k+1}\to b_{k} \implies$ \rtfrm $\iff$ \rrtfrm. } From Theorem \ref{thm:rrtfrm}, an honest miner obtains non-negative utility when, 
\begin{equation}
    \alpha \leq \frac{n}{k + (n-k) b_f}
    \label{eq:alpha}
\end{equation}
where,  $ b_f = \frac{b_{k}}{b_{k+1}}$ is the bid ratio. W.l.o.g as the bids are ordered (Figure \ref{fig:RRTFRM_framework}) $b_f \geq 1$. When $b_f = 1$, i.e., $b_k = b_{k+1}$ we have $\alpha = 1$ and \rtfrm $\iff$ \rrtfrm. This implies that every user receives a rebate, and the miner IR is not violated. 

This may seem to contradict \rtfrm\ not being IR$_\mathbf{M}$; however, we can think of $b_{k+1} \to b_k$ as one of the deviations of the strategic miner. And \rtfrm\ is IR$_\mathbf{M}$ when the miner is strategic. Furthermore, as $b_f$ increases, the upper bound on $\alpha$ becomes smaller. To still guarantee IR$_\mathbf{M}$, the overall rebate (i.e., $\alpha\cdot \frac{k}{n}$) decreases. Observe that, in the worst case, trivially, $\alpha = 0$. Generally, the bound on $\alpha$ will depend on the bid distribution. Sampling the bids uniformly $\mathbf{b} \sim Unif[0,1]$, the expected value of $\alpha$ (using order statistics) is, $ \mathbb{E}_{\mathbf{b}}[\alpha] = n/(n+1)$ (i.e., $\alpha$ tends to 1 for large values of $n$)

\paragraph{\rrtfrm: Analyzing Miner Manipulation.} Like \rtfrm, \rrtfrm\ also ensures strictly positive RRI even with miner manipulation as formally stated in Theorem \ref{thm:rri-rr-tfrm}.

\begin{theorem}
\label{thm:rri-rr-tfrm}
  Consider $n$ included transactions with the set $C$ as confirmed transactions such that $|C|=k$ with the remaining $n-k$ as price-setting transactions. Irrespective of any miner manipulation \rrtfrm\ ensures: expected RRI or $ \mathbf{E}_{\alpha}[\hat{e}_{\textsf{wc}}] =  \alpha\cdot \frac{k}{n}$.
\end{theorem}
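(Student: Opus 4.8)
The plan is to reduce the claim for \rrtfrm\ to the already-established worst-case guarantee for \rtfrm\ (Theorem~\ref{thm:rri-nonir-tfrm}), exploiting the fact that, outcome by outcome, \rrtfrm\ is either an exact copy of \rtfrm\ (with probability $\alpha$) or a plain second-price TFM that pays no rebates (with probability $1-\alpha$). First I would fix the structure from Figure~\ref{fig:RRTFRM_framework}: every included user receives exactly the \rtfrm\ rebate $r_i$ of Theorem~\ref{thm:nonir-tfrm} (so the operative coefficient is $c_k^{*}=k/n$) with probability $\alpha$ and $0$ otherwise, where the coin is drawn by the trusted on-chain randomness that the miner cannot influence.

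The key structural observation I would record is that this randomness is \emph{independent} of the miner's manipulation. Since the miner must commit to its (possibly fake) bid vector $\hat{\mathbf{b}}$ before, and with no control over, the coin, I can fix an arbitrary manipulation $\hat{\mathbf{b}}$ and reason about the RRI ratio conditionally. For a fixed $\hat{\mathbf{b}}$ the denominator in Definition~\ref{def:rri}, namely the VCG payment $p(\cdot;\hat{\mathbf{b}})$ actually collected as transaction fees, is deterministic: the coin governs only whether the rebate is disbursed, not which transactions are confirmed or what fees they pay. The numerator $\sum_{i\in S} r_i$, by contrast, is the Bernoulli-scaled \rtfrm\ rebate of the genuine (non-impersonated) users, so by linearity of expectation its expected value is $\alpha\sum_{i\in S} r_i$, with $r_i$ the deterministic \rtfrm\ rebate.

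Combining these two facts, for each fixed $\hat{\mathbf{b}}$ the expected ratio equals $\alpha$ times the corresponding \rtfrm\ ratio; pulling the nonnegative constant $\alpha$ out of the infimum over $\hat{\mathbf{b}}$ then gives
\[
\mathbf{E}_{\alpha}[\hat{e}_{\textsf{wc}}] \;=\; \alpha\cdot\inf_{\hat{\mathbf{b}},\,p(\cdot;\hat{\mathbf{b}})\neq 0}\frac{\sum_{i\in S} r_i}{p(\cdot;\hat{\mathbf{b}})} \;=\; \alpha\cdot\frac{k}{n},
\]
where the last equality is exactly the \rtfrm\ guarantee of Theorem~\ref{thm:rri-nonir-tfrm}. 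The same conclusion follows if the coin is global rather than per-user: then $\hat{e}_{\textsf{wc}}$ equals $k/n$ with probability $\alpha$ and $0$ with probability $1-\alpha$, whose expectation is again $\alpha k/n$.

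The step I expect to be the main obstacle is justifying the interchange of the expectation over $\alpha$ with the worst-case infimum over miner manipulations. This is precisely where the trusted-randomness assumption does the work: because the miner cannot correlate $\hat{\mathbf{b}}$ with the coin, the worst-case manipulation is identical to the one analyzed for \rtfrm, and the only net effect of the randomization is the multiplicative factor $\alpha$ on the redistributed mass. A secondary point worth stating explicitly is that the miner's optimal (expected-utility-maximizing) deviation does not alter the denominator's worst case, since the fees collected are unaffected by whether a rebate is paid, so moving from \rtfrm\ to \rrtfrm\ introduces no new worst-case instance.
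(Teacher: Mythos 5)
Your proposal is correct and follows essentially the same route as the paper: reduce to the \rtfrm\ guarantee of Theorem~\ref{thm:rri-nonir-tfrm} by observing that the trusted on-chain coin is independent of the miner's choice of $\hat{\mathbf{b}}$, so the expected redistributed mass is just $\alpha$ times the deterministic \rtfrm\ rebate while the collected VCG payment is unchanged. Your write-up is more explicit than the paper's two-sentence argument about why the expectation and the infimum commute, but it is the same idea.
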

\begin{proofsketch}
    The result follows from the similar result in Theorem \ref{thm:rri-nonir-tfrm} for \rtfrm\ and the fact that the miner has no control over the randomization in \rrtfrm.
\end{proofsketch}

We observe that irrespective of the miner manipulation; the users receive back the fraction $\alpha\cdot \frac{k}{n}$ of the payment made on expectation. Based on what fake transactions the miner inserts, the payments change, but the refund fraction remains the same as for the case when the miner is honest.


\paragraph{Discussion on TFRMs.} Given that the current TFM literature assumes that users and miners are myopic, we believe that redistributing the surplus is more effective in reducing the net payments paid by the users. Another desirable property in TFMs is that of \emph{predictable} transaction fees, i.e., reducing the volatility of the fees paid by the users. For instance, EIP-1559~\cite{buterin2019eip} uses a deterministic function based on the previous block consumption to calculate a network-determined minimum threshold fee (called \emph{base} fee) paid by each user. The users can also pay a \emph{priority} fee over the base fee to incentivize the miners to include their transactions. The base fee aims to reduce the fee volatility and aims to arrive at a market clearing price. Crucially, this minimum threshold fee is \emph{burned} -- transferred to an unspendable address -- implying that the miner does not receive this fee as revenue. We see that burning some fraction of the fee is necessary to guarantee such properties. In such mechanisms, the priority fee calculated post-burning can be further reduced by employing TFRMs. 

\section{Conclusion\label{sec::con}}
In this paper, we argued the importance of minimizing user costs in a TFM. Our key idea is to employ a redistribution mechanism-based approach for determining the transaction fees, which we call the Transaction Fee Redistribution Mechanism (TFRM). Due to strategic miner manipulation, we first show that guaranteeing a strictly positive rebate in a TFRM and other desirable properties is impossible. Hence, we propose \rtfrm, which ensures strictly positive rebates even in the worst case but compromises on miner's IR. However, we show that in \rtfrm, a strategic miner will never incur negative utility while still guaranteeing strictly positive rebates to the users. We also propose \rrtfrm\, which uses blockchain's inherent randomness to guarantee a strictly positive rebate to the users while also respecting the miner's IR.

\noindent\textbf{Future Work.} Future directions can explore TFRMs with randomized rebate functions, which may likely satisfy stronger notions of IC and IR. Another approach may be to explore non-linear rebate functions, which may provide a better redistribution index on average. In addition, unlike this work, future work can also explore transactions with varying sizes. 



\bibliographystyle{ACM-Reference-Format} 
\bibliography{ref}


\appendix

\section{Proofs for Results from Section 4 and 5~\label{app_proofs}}


\subsection{Proof of Theorem~\ref{thm::ideal_tfrm_imp}\label{app:thm_proof_ideal_tfrm}}

\begin{theorm}[\texttt{Ideal-TFRM} Impossibility]
If $r^\star$ is an anonymous rebate function that satisfies Theorem \ref{thm:order},  no \texttt{Ideal-TFRM} can guarantee a non-zero redistribution index (RI) in the worst case.
\end{theorm}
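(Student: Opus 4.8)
The plan is to argue by contradiction, exploiting the two structural facts that the rebate is anonymous (Definition~\ref{def:anonymity}) and UIC (Theorem~\ref{thm:order}): there is a \emph{single} fixed function $g$ that, for every user $i$, returns the rebate $r_i = g(b_1,\ldots,b_{i-1},b_{i+1},\ldots,b_n)$ as a function only of the sorted complement bid vector, and crucially this function does not ``know'' whether user $i$ is confirmed or price-setting. The whole argument hinges on this blindness of $g$ to confirmation status. Recall that in the second-price TFM underlying \texttt{Ideal-TFRM} there is exactly one unconfirmed (price-setting) transaction, namely the $(k{+}1)$-th, with $n=k+1$ included bids.

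First I would suppose, toward a contradiction, that some \texttt{Ideal-TFRM} attains a strictly positive rebate for a confirmed user on some profile $\mathbf{b}$. Taking the top bidder (user $1$) without loss of generality, its rebate is $r_1 = g(b_2,\ldots,b_n) > 0$, where $(b_2,\ldots,b_n)$ is a sorted, non-negative $(n{-}1)$-vector. The key step is to construct a second profile that ``relocates'' this same input vector to the price-setting slot. I would take $\mathbf{b}' = (b_2,b_3,\ldots,b_n,\varepsilon)$ for an arbitrarily small $\varepsilon$ with $0\le \varepsilon \le b_n$. In $\mathbf{b}'$ the top $k=n-1$ entries $b_2\ge\cdots\ge b_n$ are confirmed and $\varepsilon$ is the unique price-setting bid; hence the complement vector of the unconfirmed user is exactly $(b_2,\ldots,b_n)$. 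By anonymity and Theorem~\ref{thm:order}, the unconfirmed user's rebate on $\mathbf{b}'$ is $g(b_2,\ldots,b_n)=r_1>0$, contradicting the \texttt{Ideal-TFRM} constraint $r_i=0$ for $i\in I\setminus C$ (Eq.~\ref{eqn_ideal_tfrm}).

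Finally I would close the argument. Because the construction applies verbatim to every profile and every sorted complement vector --- any such vector can be realized as the unconfirmed user's complement by appending a sufficiently small $\varepsilon$ --- the requirement $r_i=0$ for all unconfirmed users on all profiles forces $g\equiv 0$ on its entire domain of sorted non-negative $(n{-}1)$-vectors. Consequently every confirmed rebate also vanishes, so $\sum_{i\in C} r_i = 0$ on every profile and $e_{\textsf{wc}} = \inf_{\mathbf{b}:p(\mathbf{b})\neq 0}\frac{\sum_i r_i}{p(\mathbf{b})} = 0$.

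I expect the only delicate point to be the validity of the relocation step: I must check that appending $\varepsilon\le b_n$ keeps the top $n-1$ bids confirmed and makes $\varepsilon$ the unique price-setting bid, so that the unconfirmed user's complement vector matches the target vector exactly and anonymity transfers the value $g(b_2,\ldots,b_n)$ unchanged. The limiting case $\varepsilon\to 0$ and possible ties among the $b_i$ require a fixed tie-breaking convention, but these do not affect the conclusion, since $g$ depends only on the (sorted) complement bids.
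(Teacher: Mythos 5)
Your proposal is correct and uses essentially the same argument as the paper: anonymity plus the UIC characterization make the rebate a single function $g$ of the complement bid vector, blind to confirmation status, so the constraint $r_i=0$ for unconfirmed transactions propagates to confirmed ones by relocating the same complement vector to the price-setting slot. If anything, your version is slightly cleaner than the paper's (which permutes a fixed profile and argues the worst case for one agent at a time), since appending a small $\varepsilon$ realizes \emph{every} sorted complement vector as an unconfirmed user's input and hence forces $g\equiv 0$ outright.
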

\begin{proof}
Consider the bid vector, $(v_1, \ldots, v_n)$ and a UIC rebate function $r_i^{\star} = g(v_{-i})$ as given by Theorem \ref{thm:order} satisfying Equation~\ref{eqn_ideal_tfrm}. W.l.o.g we assume $r_1^{\star} = g(v_2, \ldots, v_n) > 0$, that is agent 1 receives strictly positive rebate. Further $r_n^{\star} = g(v_1, \ldots, v_{n-1}) = 0$ since the last agent is not confirmed, and receives strictly zero rebate as per the constraint of Eq.~\ref{eqn_ideal_tfrm}. We now construct another bid vector $(v'_1, \ldots, v'_n)$ such that $v'_1 = v_{n}, v'_2 = v_1, v'_3 = v_4, \ldots, v'_{n} = v_{n-1}$. Under this bid profile, $r_1^{\star} = g(v'_2, \ldots, v'_n) = g(v_1, \ldots, v_{n-1}) = 0$, since for agent $n$, $r_n^* = 0$. Hence the worst case refund for the first agent will be $0$ contradicting our assumption that $r_1^{\star} > 0$. Similarly, we can construct bid vectors to show that for every agent $i\in C$, the worst case refund $r_i^* = 0$. 
\end{proof}

\subsection{Proof of Theorem~\ref{thm::imp_RRI}\label{app::thm_imp_rri}}

\begin{theorm}
     Given a strategic miner, it is impossible to design a TFRM with a linear rebate function that is RUIC, AE, both IR$_u$ and IR$_\mathbf{M}$ and guarantees a strictly positive RRI, i.e., $\hat{e}_{\textsf{wc}}>0$.
\end{theorm}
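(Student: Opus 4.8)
The plan is to exhibit, for any linear rebate function satisfying all four properties simultaneously, an explicit miner manipulation that drives the guaranteed rebate to zero while leaving the miner's collected payment positive. First I would pin down the structure of an admissible linear rebate. By Theorem~\ref{thm:order}, RUIC forces each user $i$'s rebate to be of the form $r_i = c_0 + \sum_{j \neq i} c_{?} b_j$, independent of $b_i$, with the coefficients attached to the sorted bids of the remaining included users. The next step is to argue that the IR$_u$, IR$_\mathbf{M}$, and AE constraints together eliminate the coefficients on the ``high'' bids. Concretely, AE fixes the confirmation of the top-$k$ transactions and the VCG payment at $b_{k+1}$, so the total payment collected by the miner is exactly $k \cdot b_{k+1}$ (for an honest execution). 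The IR$_\mathbf{M}$ constraint $\sum_i r_i \le k \cdot b_{k+1}$ must hold for \emph{every} bid profile; by scaling the top bids $b_1,\ldots,b_k$ arbitrarily large while holding $b_{k+1},\ldots,b_n$ fixed, any positive coefficient on $b_1,\ldots,b_k$ (or a positive constant $c_0$) would make $\sum_i r_i$ exceed $k \cdot b_{k+1}$, so those coefficients must vanish — mirroring Claim~\ref{claim:12k-nonir-tfrm}. Thus the rebate can depend only on the bids $b_{k+1},\ldots,b_n$, and IR$_\mathbf{M}$ further constrains the coefficient on $b_{k+1}$ itself.

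The crux of the argument is then to observe which bids the miner can freely overwrite without sacrificing revenue. The honest payment $k \cdot b_{k+1}$ depends only on the price-setting bid $b_{k+1}$; the bids $b_{k+2},\ldots,b_n$ are strictly below the confirmation threshold and contribute nothing to the miner's collected revenue (each such user pays only $-r_j$, which the miner returns). Hence I would have the miner replace each of $b_{k+2},\ldots,b_n$ with fake bids of its own choosing. Since the rebate function, after the reduction above, is a nonnegative linear combination of $b_{k+1},\ldots,b_n$, the miner sets the fake bids $\hat b_{k+2} = \cdots = \hat b_n = 0$ (or arbitrarily small), driving the $b_{k+2},\ldots,b_n$ contributions to the rebate to zero. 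What remains of the rebate is the coefficient on $b_{k+1}$; but the RRI in Definition~\ref{def:rri} counts only rebates to the set $S$ of genuine (non-impersonated) users. Because the miner impersonates all of $b_{k+2},\ldots,b_n$, the only genuine included users are the confirmed ones and possibly the price-setter; the rebate each genuine user actually receives is now governed entirely by the manipulated (zeroed) bids, yielding $\sum_{i \in S} r_i = 0$ while $p(\cdot;\hat{\mathbf{b}}) = k \cdot b_{k+1} > 0$. Therefore $\hat{e}_{\textsf{wc}} = 0$, contradicting strict positivity.

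The main obstacle I anticipate is making the coefficient-elimination step fully rigorous while simultaneously respecting all four properties, rather than treating them in isolation. In particular, one must verify that no positive coefficient on $b_{k+1}$ can survive in a way that still produces a genuine-user rebate after the manipulation: the subtlety is that $b_{k+1}$ is itself a price-setting bid the miner may or may not impersonate, so I would split into cases according to whether $b_{k+1}$ is genuine or fake, and in each case show the surviving rebate accrues either to zero genuine users or is forced to zero by IR$_\mathbf{M}$ under an adversarial scaling of the confirmed bids. A secondary care point is ensuring the constructed manipulated profile $\hat{\mathbf{b}}$ keeps the same top-$k$ confirmed set (so AE and the payment value are preserved) and that the infimum in Definition~\ref{def:rri} is genuinely attained (or approached) at this profile. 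Once the linear structure is reduced to dependence on $b_{k+1},\ldots,b_n$ and the miner is allowed to overwrite the revenue-irrelevant tail, the conclusion $\hat{e}_{\textsf{wc}} = 0$ follows directly, matching the proof-sketch intuition that changing $b_{k+2},\ldots,b_n$ leaves the miner's payment untouched.
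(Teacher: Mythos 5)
Your overall architecture coincides with the paper's: use Theorem~\ref{thm:order} to restrict attention to order-based linear rebates, eliminate the coefficients attached to the high bids via the IR constraints, and then let the miner overwrite the revenue-irrelevant tail $b_{k+2},\ldots,b_n$ with zeros while setting $\hat{b}_{k+1}=b_k$. The genuine gap is exactly the point you flag as your ``main obstacle'' and then do not resolve: the coefficient $c_k$ multiplying $b_{k+1}$ in a confirmed user's rebate. Your reduction only reaches ``the rebate depends on $b_{k+1},\ldots,b_n$,'' and if $c_k>0$ survives, your manipulation does \emph{not} yield $\hat{e}_{\textsf{wc}}=0$: after setting $\hat{b}_{k+1}=b_k$ and zeroing the tail, each genuine confirmed user still receives $c_k\hat{b}_{k+1}=c_k b_k>0$ against a total payment of $k\, b_k$, giving a redistribution fraction of $c_k>0$. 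Neither of your proposed repairs closes this: scaling the confirmed bids $b_1,\ldots,b_k$ never touches the term $c_k b_{k+1}$, and case-splitting on whether $b_{k+1}$ is genuine or impersonated is immaterial because the rebate $c_k b_{k+1}$ accrues to the genuine confirmed users in $S$ either way.

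The missing lemma is that IR$_u$ together with exact IR$_\mathbf{M}$ forces $c_k=0$, not merely $c_0=\cdots=c_{k-1}=0$; this is precisely what the paper imports from \cite{Guo09}, and it is also precisely where Approx-IR$_\mathbf{M}$ in \rtfrm\ differs (there $c_k=k/n$ is allowed). The profile that does the work is $b_1=\cdots=b_k=1$, $b_{k+1}=\cdots=b_n=0$: the VCG revenue is $k\cdot b_{k+1}=0$; each confirmed user's rebate has $c_k$ multiplying $b_{k+1}=0$ (and all lower coefficients already vanish), while each of the $n-k$ unconfirmed users' rebates contains $c_k\cdot b_k=c_k$, so IR$_\mathbf{M}$ gives $(n-k)c_k\le 0$, and IR$_u$ applied to user $n$ gives $r_n=c_k\ge 0$; hence $c_k=0$. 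With that in hand the confirmed users' rebates depend only on $b_{k+2},\ldots,b_n$ and your manipulation goes through exactly as in the paper's Appendix~\ref{app::thm_imp_rri}. In short: same route as the paper, one concretely identifiable step missing, and the repair is the zero-revenue profile above rather than the adversarial scalings you describe.
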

\begin{proof}
Consider selecting $k$ transactions from the total included $n$ with $n-k$ as price-setting transactions. Any deterministic linear rebate function that is RUIC must have the following form (from Theorem \ref{thm:order}):
$$ r_i = c_0 + c_1 b_1 + \ldots + b_i c_{i+1} + \ldots + c_{n-1} b_n $$
The above $r_i$s satisfy both IR$_u$ and IR$_\mathbf{M}$ only when the constants $c_i = 0, \forall i = {1, \ldots, k}$. The proof for this can be found in \cite[Claim 1]{Guo09}. Hence, in the rebate function, the top $k$ agents have the following form:
$$r_i = c_{k+1} b_{k+2} + \ldots + c_{n-1}b_n, \ \forall i\leq k$$
The rebate offered for the price-setting transactions is given by,
\begin{align*}
    r_{k+1} &= c_{k+1} b_{k+2} + \ldots + c_{n-1}b_n \\
    r_{k+2} &= c_{k+1} b_{k+1} + \ldots + c_{n-1}b_n \\
    \vdots\\
    r_{n} &= c_{k+1} b_{k+1} + \ldots + c_{n-1} b_{n-1}
\end{align*}
Now, consider the following miner deviation:
$$\hat{b}_{k+1} = b_k, \hat{b}_{k+2} = \ldots = \hat{b}_{n} = 0$$

The total VCG payments obtained is $k\hat{b}_{k+1} = k b_k$ and $r_i = 0, \forall i \leq k+1$. Hence the top $k+1$ transactions receive zero rebate. Now the miner has impersonated as $k+2, \ldots, n$ price setting users. Hence the miner receives back the rebate of $r_{k+2} = \ldots = r_{n} = c_{k+1} \hat{b}_{k+1} = c_{k+1} b_k$. Therefore the effective redistribution to the users is $\sum_{i=1}^k r_i = 0 \implies \hat{e}_{\textsf{wc}} = 0$.
\end{proof}

\section{Proofs for Results from Section 6~\label{app::section4}}

\subsection{Proof of Claim~\ref{claim:ir-nonir-tfrm}}

\begin{Claim}
A TFRM with $n$ included transactions and rebate $r$ is IR$_u$ if $r_n \ge 0$.
\end{Claim}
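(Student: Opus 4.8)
The plan is to reduce ex-post IR$_u$ for all $n$ included users to the single inequality $r_n \ge 0$, and then to show that this inequality already forces every rebate to be nonnegative. First I would write out the utilities induced by the payment rule of the general TFRM (Figure~\ref{fig:TFRM_framework}) under truthful bids: a confirmed user $i \in C$ has $u_i = b_i - b_{k+1} + r_i$, whereas an unconfirmed included user $j \in I\setminus C$ has $u_j = r_j$. Because the bids are sorted, $i \le k$ gives $b_i \ge b_{k+1}$, so every confirmed user satisfies $u_i \ge r_i$. Hence IR$_u$ for all included users follows once I establish $r_i \ge 0$ for every $i$, with the confirmed users carrying an extra nonnegative slack $b_i - b_{k+1}$ on top of their rebate.

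The heart of the argument is the implication that if $r_n \ge 0$ holds for \emph{every} profile $\mathbf{b}\ge 0$, then $r_i \ge 0$ for every $i$ and every $\mathbf{b}\ge 0$; this matches the quantification ``for every $\mathbf{b}\ge 0$'' used in the LP of Figure~\ref{fig:tfrm-lp}. Using the linear rebate form (Definition~\ref{def:lr}), I note that $r_n = c_0 + \sum_{m=1}^{n-1} c_m b_m$ is independent of $b_n$. I would then observe that the ordered region $\{b_1 \ge \cdots \ge b_{n-1} \ge 0\}$ is a polyhedral cone whose extreme rays are the $0/1$ vectors $(1,\dots,1,0,\dots,0)$; evaluating the affine form $r_n$ on these rays shows that nonnegativity of $r_n$ everywhere is equivalent to $c_0 \ge 0$ together with the partial-sum conditions $C_\ell := \sum_{m=1}^{\ell} c_m \ge 0$ for $\ell = 1,\dots,n-1$. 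Finally I would write each $r_i$ as an affine form $c_0 + \sum_m d_m b_m$ over the full ordered region $\{b_1 \ge \cdots \ge b_n \ge 0\}$, where deleting $b_i$ merely shifts the coefficients, and verify by a short index computation that every running partial sum of the $d_m$ coincides with one of the $C_\ell$ (or with $C_0 = 0$). The same cone criterion then delivers $r_i \ge 0$, which combined with the first paragraph gives IR$_u$.

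The hard part will be resisting the tempting but false shortcut of a pointwise monotonicity $r_1 \ge \cdots \ge r_n$ that would make $r_n$ the smallest rebate: from $r_{i+1} - r_i = c_i(b_i - b_{i+1})$ one sees that with nonnegative coefficients $r_n$ is actually the \emph{largest} rebate, so $r_n \ge 0$ cannot imply the remaining inequalities by any per-profile monotonicity argument. The essential leverage instead comes from the universal quantifier over bid profiles: demanding $r_n \ge 0$ on the \emph{entire} cone of ordered bids pins down the signs of the coefficient partial sums, and those very partial sums control every other $r_i$. Once this equivalence is in place, the remainder is routine bookkeeping on indices.
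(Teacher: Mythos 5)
Your proof is correct, but it takes a genuinely different route from the paper's. The paper argues by anonymity and relabeling: if some profile $\mathbf{b}$ gave $r_i(\mathbf{b}) = g(\mathbf{b}_{-i}) < 0$ for a user $i<n$, one constructs a second profile $\mathbf{b}'$ (append a zero bid after deleting $b_i$) whose last user sees exactly the same sorted vector of others' bids, so $r_n(\mathbf{b}') = g(\mathbf{b}'_{-n}) = g(\mathbf{b}_{-i}) < 0$, contradicting the universally quantified constraint $r_n \ge 0$. That argument is two lines, needs no linearity, and works for any anonymous rebate function satisfying Theorem~\ref{thm:order}. You instead exploit the linear form: you characterize nonnegativity of the affine form $r_n$ on the ordered cone via its extreme rays (this is exactly \cite[Lemma 1]{Guo07}, which the paper invokes later for Claims~\ref{claim:iru-nonir-tfrm} and \ref{claim:irmwc-nonir-tfrm}), obtaining $c_0 \ge 0$ and $\sum_{m=1}^{\ell} c_m \ge 0$, and then observe that every $r_i$ applies the \emph{same} coefficient sequence $c_1,\dots,c_{n-1}$ to the sorted vector of the other $n-1$ bids, so the identical partial-sum conditions deliver $r_i \ge 0$. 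What your approach buys is an explicit coefficient-level criterion that is reused downstream in the LP reformulation, plus the valuable observation (your last paragraph) that a per-profile monotonicity shortcut fails because $r_{i+1}-r_i = c_i(b_i - b_{i+1})$ makes $r_n$ the largest rebate, so the universal quantifier over bid profiles is doing all the work; what it costs is generality, since your argument is tied to linear rebates while the paper's relabeling trick is not. Both correctly reduce IR$_u$ for confirmed users to $r_i \ge 0$ via the slack $\theta_i - b_{k+1} \ge 0$ under truthful bidding.
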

\begin{proof}
    Suppose there exists a bid vector $\mathbf{b}=(b_1, \ldots, b_n)$ and some user $i < n$ such that $r_i < 0$. $r_i$ is computed using $\mathbf{b}_{-i}$, i.e, $r_i = c_0 + c_1b_1+\ldots+c_{i-1}b_{i-1} + c_i b_{i+1} +\ldots+c_{n-1}b_n$. Now, consider a new bid vector $\mathbf{b}^{'}$ s.t., $b_1^{'} = b_2, b_2^{'} = b_3, \ldots, b_n^{'} = 0$. Observe that $\mathbf{b}^{'}_{-n} = \mathbf{b}_{-i}$, hence $r_i(\mathbf{b}_{-i},\cdot) =r_n(\mathbf{b}^{'}_{-n},\cdot)< 0$. Thus TFRM will be IR$_u$ if for any bid vector $r_n \ge 0$.
\end{proof}

\subsection{Proof of Claim~\ref{claim:12k-nonir-tfrm}}

\begin{Claim}
    If $c_0, \ldots, c_{n-1}$ satisfy IR$_u$ and Approx-IR$_\mathbf{M}$, then $c_i = 0$ for $i = 0, \ldots, k-1$.
\end{Claim}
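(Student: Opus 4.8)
The plan is to exploit that both IR$_u$ (namely $r_n \ge 0$) and Approx-IR$_\mathbf{M}$ (namely $\sum_{i=1}^n r_i \le k b_k$) are required to hold for \emph{every} ordered profile $\mathbf{b}=(b_1\ge \cdots \ge b_n \ge 0)$, so I may substitute adversarially chosen bid vectors to squeeze the coefficients. The argument parallels \cite[Claim 1]{Guo09}. First I would pin down the constant $c_0$; then I would write the total rebate $\sum_i r_i$ as an explicit linear functional of the sorted bids; and finally I would apply a family of ``step'' profiles with fewer than $k$ nonzero coordinates --- which makes the right-hand side $k b_k$ collapse to zero --- to force the relevant partial sums of the $c_j$'s to vanish.

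First, setting $b_1=\cdots=b_n=0$ makes Approx-IR$_\mathbf{M}$ read $n c_0 \le 0$ while IR$_u$ reads $r_n = c_0 \ge 0$, so $c_0 = 0$. Next I would compute the coefficient of each sorted bid $b_\ell$ in $\sum_{i=1}^n r_i$: because the rebate $r_i$ omits user $i$'s own bid, $b_\ell$ enters $r_i$ with coefficient $c_\ell$ whenever $i > \ell$, with coefficient $c_{\ell-1}$ whenever $i < \ell$, and not at all when $i=\ell$. Summing over $i$ gives the coefficient $d_\ell := (n-\ell)c_\ell + (\ell-1)c_{\ell-1}$, so Approx-IR$_\mathbf{M}$ becomes $\sum_{\ell=1}^n d_\ell b_\ell \le k b_k$ for all admissible $\mathbf{b}$.

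For each $j \in \{1,\ldots,k-1\}$ I would take $b_1=\cdots=b_j=B$ and $b_{j+1}=\cdots=b_n=0$. Since $j<k$ the right-hand side is $k b_k = 0$, so Approx-IR$_\mathbf{M}$ forces $B\sum_{\ell=1}^j d_\ell \le 0$ for all $B\ge 0$, i.e.\ $\sum_{\ell=1}^j d_\ell \le 0$. A short telescoping rewrites this partial sum: writing $S_j := \sum_{\ell=1}^j c_\ell$ with $S_0=0$, one combines the $c_\ell$ and $c_{\ell-1}$ contributions to get $\sum_{\ell=1}^j d_\ell = (n-j)c_j + n S_{j-1} = (n-j)S_j + j S_{j-1}$. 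The same step profiles plugged into IR$_u$ give $r_n = B S_j \ge 0$, hence $S_j \ge 0$ for every $j\le k-1$ (and likewise $S_{j-1}\ge 0$). Since $n-j>0$ and $j>0$, the inequality $(n-j)S_j + j S_{j-1} \le 0$ with both summands nonnegative forces $S_j = S_{j-1} = 0$; letting $j$ range over $1,\ldots,k-1$ yields $S_0 = \cdots = S_{k-1} = 0$, whence $c_j = S_j - S_{j-1} = 0$ for $j=1,\ldots,k-1$, and together with $c_0=0$ this is exactly the claim.

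The main obstacle is purely the bookkeeping: getting $d_\ell$ right (the own-bid omission shifts the multiplier across the ``gap'' at $i=\ell$ from $c_\ell$ to $c_{\ell-1}$) and carrying out the telescoping that collapses $\sum_{\ell\le j} d_\ell$ into $(n-j)S_j + j S_{j-1}$. Once that identity is in hand the nonnegativity squeeze is immediate, so the only genuinely creative step is choosing profiles with exactly $j<k$ nonzero entries, which kills $k b_k$ while leaving the left-hand side a positive multiple of $\sum_{\ell\le j} d_\ell$.
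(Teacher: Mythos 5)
Your proof is correct and takes essentially the same route as the paper's: the all-zeros profile pins down $c_0$, and step profiles with $j<k$ equal nonzero bids make the right-hand side $k b_k$ vanish, so that Approx-IR$_\mathbf{M}$ and IR$_u$ together squeeze the coefficients to zero. The only cosmetic difference is that the paper argues by contradiction on the minimal index $j$ with $c_j\neq 0$, which collapses the total rebate to $(n-j)c_j$ and avoids your $d_\ell$ bookkeeping, whereas you carry the partial sums $S_j$ explicitly; both are sound.
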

\begin{proof}
    First, we show that $c_0 = 0$. Consider a bid vector $\hat{b}_i = 0$ for all $i$, then for IR$_u$ we must have $c_0  \geq 0$ (from Claim  \ref{claim:ir-nonir-tfrm}). To satisfy Approx-IR$_\mathbf{M}$, we must have $c_0 \leq 0$ hence $c_0 = 0$. If $c_i = 0,  \forall i$, we are done hence consider $j = \min\{i | c_i\neq 0\}$. Let $j < k$, then from IR$_u$ constraint we have that $r_n = c_0 + c_1 \hat{b}_1 + \ldots + c_{n-1}\hat{b}_{n-1} \geq 0$. Consider a bid vector s.t. $\hat{b}_i = 1$ for $i \leq j$ and $\hat{b}_i =0 $ for the rest. For this, $r_n = c_j$ therefore $c_j \geq 0$. While the Approx-IR$_\mathbf{M}$ constraint states that $r_1 + \ldots + r_n \leq k b_{k}$ for the above bid vector $\hat{b}$, we obtain $r_i = 0$ for $i\leq j$ as $c_j$ is multiplied with a bid that is $j^{th}$ highest hence 0. Whereas $r_i = c_j$ for $i>j$, because the $j^{th}$ highest bid is 1. Therefore the constraint can be re-written as $c_j (n-j) \leq k \hat{b}_k$. Because $j < k$, $\hat{b}_{k} = 0$ so the right hand side is 0.  Also $n-j > 0$ because $j < k < n$, therefore $c_j \leq 0$. But we already know from IR$_u$ that $c_j \geq 0$, hence $c_j =0$ for $j<k$. Hence by contradiction, $j\geq k$, i.e., $c_j = 0$ for $j < k$.
\end{proof}

\subsection{Proof of Claim~\ref{claim:iru-nonir-tfrm}}

\begin{Claim}
    The IR$_u$ constraint $r_n \ge 0$ and the worst-case fraction constraint (refer Figure~\ref{fig:tfrm-lp}) is equivalent to having $\sum_{j=k}^i c_j \geq f, \ i \in \{k, \ldots, n-1\}$.
\end{Claim}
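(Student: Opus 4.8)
The plan is to start from Claim~\ref{claim:12k-nonir-tfrm}, which already forces $c_0=\cdots=c_{k-1}=0$, so the only live coefficients are $c_k,\ldots,c_{n-1}$. Substituting these zeros into the linear rebate form, I would observe that every confirmed user $i\le k$ receives the \emph{same} rebate: its own bid never enters its rebate, and all surviving coefficients $c_k,\ldots,c_{n-1}$ multiply bids indexed $k+1$ or higher, giving $r_i=\sum_{j=k}^{n-1}c_j\,b_{j+1}$ for every $i\le k$. Hence $\sum_{i=1}^k r_i = k\sum_{j=k}^{n-1}c_j\,b_{j+1}$, and after cancelling $k$ the worst-case fraction constraint $\sum_{i=1}^k r_i\ge f\cdot k\cdot b_{k+1}$ becomes
\[
\sum_{j=k}^{n-1}c_j\,b_{j+1}\;\ge\;f\,b_{k+1},
\]
required for every ordered bid vector $b_1\ge\cdots\ge b_n\ge 0$. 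The same substitution turns the IR$_u$ constraint $r_n\ge 0$ into $\sum_{j=k}^{n-1}c_j\,b_j\ge 0$. Both inequalities are thus linear functionals of the relevant bids that must stay non-negative over an \emph{ordered cone}: $\{b_{k+1}\ge\cdots\ge b_n\ge 0\}$ for the fraction constraint and $\{b_k\ge\cdots\ge b_{n-1}\ge 0\}$ for IR$_u$.

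The core step is a standard extreme-ray (worst-case) argument for such cones. I would use the fact that an ordered cone $\{\beta_1\ge\cdots\ge\beta_m\ge 0\}$ is generated by the prefix-indicator rays $(1,\ldots,1,0,\ldots,0)$: setting $\gamma_s=\beta_s-\beta_{s+1}\ge 0$ (and $\gamma_m=\beta_m$) writes any ordered vector as a non-negative combination of these rays. Consequently a linear functional is non-negative over the whole cone if and only if it is non-negative on each prefix ray, i.e.\ if and only if every prefix sum of its coefficients is non-negative. This reduces an infinite family of bid-dependent inequalities to finitely many conditions on the $c_j$.

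Applying this to the fraction constraint, rewritten as $(c_k-f)\,b_{k+1}+\sum_{j=k+1}^{n-1}c_j\,b_{j+1}\ge 0$, the prefix-sum conditions read exactly $\sum_{j=k}^{i}c_j\ge f$ for $i\in\{k,\ldots,n-1\}$; applying it to the IR$_u$ constraint gives $\sum_{j=k}^{i}c_j\ge 0$ over the same range. Since $f\ge 0$, the IR$_u$ inequalities are implied by the fraction inequalities, so the conjunction of the two constraints is equivalent to $\sum_{j=k}^{i}c_j\ge f$ for all $i\in\{k,\ldots,n-1\}$, which is the claim. The main obstacle is the bookkeeping in the reduction rather than any deep difficulty: I must verify that after eliminating $c_0,\ldots,c_{k-1}$ the confirmed rebates genuinely coincide, and that the $-f$ shift lands precisely on the coefficient of $b_{k+1}$ (the largest bid in the fraction cone, matching the right-hand side $f\,b_{k+1}$), while checking that the extra $b_k$ coordinate appearing in the IR$_u$ functional yields no condition stronger than those already produced by the fraction inequalities.
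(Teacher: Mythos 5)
Your proof is correct and follows essentially the same route as the paper's: eliminate $c_0,\ldots,c_{k-1}$ via Claim~\ref{claim:12k-nonir-tfrm}, then convert each bid-dependent inequality over the ordered cone into prefix-sum conditions on the $c_j$ (the paper invokes this as Lemma~1 of \cite{Guo07}; your prefix-indicator-ray decomposition is exactly a proof of that lemma). You are in fact slightly more complete than the paper, which silently drops the $r_n\ge 0$ constraint, whereas you explicitly verify that its prefix sums $\sum_{j=k}^i c_j \ge 0$ are subsumed by the fraction conditions $\sum_{j=k}^i c_j \ge f$ once $f\ge 0$.
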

\begin{proof}
    We consider the following lemma to prove the claim.
    \begin{Lemma}\cite[Lemma 1]{Guo07}
    \label{lemma:first}
    Given $n\in\mathbb{N}$ and set of real constants $s_1, \ldots, s_n$, $(s_1 t_1 + \ldots + s_nt_n \geq 0$ for any $t_1 \ge \ldots \ge t_n \geq 0)$ iff $\sum_{i=0}^j s_i \geq 0 $ for $j = 1,2 \ldots, n$.
    \end{Lemma}
    The proof of above lemma is given in \cite{Guo07}. From Claim \ref{claim:12k-nonir-tfrm}, we know that $c_i = 0 $ for $i < k$. Hence, we can re-write the worst-case constraint as follows,
    \begin{align*}
        \sum_{i=1}^k r_i &\geq f \cdot k b_{k+1} \\
        k \cdot (c_k b_{k+1} + \ldots + c_{n-1} b_n ) &\geq f \cdot k b_{k+1} \\
        (c_k - f) b_{k+1} + \ldots + c_{n-1} b_n & \geq 0
    \end{align*}
Given that $b_{k+1} \leq b_{k+2} + \ldots + b_n \geq 0$, we can invoke the above Lemma and thus, we obtain the following condition:
$c_k - f + c_{k+1} + \ldots + c_i \geq 0, \ \forall i = k , \ldots, n-1$. Thus, the worst-case fraction constraint can be written as $\sum_{j=k}^i c_j \geq f, i\in \{k, \ldots, n-1\}$.
\end{proof}

\subsection{Proof of Claim~\ref{claim:irmwc-nonir-tfrm}}

\begin{Claim}
    The Approx-IR$_\mathbf{M}$ constraint can be replaced by, 
    \begin{align*}
    (n-k)c_k \leq k~\mbox{~and~} &  n \cdot\sum_{j=k}^{n-1}c_j \leq k \\
 n\sum_{j=k}^{k+i-1}c_j + (n-k-i) c_{k+i} \leq k,~ & 
        i\in \{1, \ldots, n-k-1\}
    \end{align*}
\end{Claim}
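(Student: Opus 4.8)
The plan is to reduce the single functional inequality $\sum_{i=1}^n r_i \le k\,b_k$, required to hold for every ordered bid vector $\mathbf{b}\ge 0$, to a finite list of linear inequalities in the constants $c_k,\ldots,c_{n-1}$, exactly as was done for the worst-case constraint in Claim~\ref{claim:iru-nonir-tfrm}. The engine for this reduction is again Lemma~\ref{lemma:first}, which characterises when a weighted sum of ordered non-negative reals is non-negative by the non-negativity of its partial sums. So the first step is to rewrite the Approx-IR$_\mathbf{M}$ inequality as $k\,b_k - \sum_i r_i \ge 0$ and to read off its coefficients as a linear form in the ordered bids.

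The key computation is to aggregate $\sum_{i=1}^n r_i$ into a single linear form in $b_1,\ldots,b_n$. I would do the bookkeeping bid-by-bid: a fixed bid $b_\ell$ enters $r_i$ for every $i\neq\ell$, and by the structure of the rebate function its multiplier is $c_\ell$ whenever the removed bid $b_i$ sits below it (there are $n-\ell$ such $i$) and $c_{\ell-1}$ whenever $b_i$ sits above it (there are $\ell-1$ such $i$). Hence the coefficient of $b_\ell$ in $\sum_i r_i$ is $(n-\ell)c_\ell+(\ell-1)c_{\ell-1}$. Invoking Claim~\ref{claim:12k-nonir-tfrm} to set $c_0=\cdots=c_{k-1}=0$, these coefficients vanish for $\ell<k$, equal $(n-k)c_k$ at $\ell=k$, and equal $(n-\ell)c_\ell+(\ell-1)c_{\ell-1}$ for $\ell>k$. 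Consequently $k\,b_k-\sum_i r_i$ is a linear form involving only $b_k,\ldots,b_n$, with leading coefficient $s_k=k-(n-k)c_k$ and coefficients $s_\ell=-\bigl[(n-\ell)c_\ell+(\ell-1)c_{\ell-1}\bigr]$ for $\ell>k$.

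Since $b_k\ge b_{k+1}\ge\cdots\ge b_n\ge 0$, and the vanishing coefficients free us entirely of $b_1,\ldots,b_{k-1}$, I would apply Lemma~\ref{lemma:first} to conclude that the constraint is equivalent to $\sum_{\ell=k}^{j}s_\ell\ge 0$ for each $j\in\{k,\ldots,n\}$. Evaluating these partial sums is then routine telescoping: at $j=k$ it gives $(n-k)c_k\le k$; for an intermediate $j=k+i$ the internal coefficients collapse to $n$ (because $(n-\ell)+\ell=n$) while the boundary term $c_{k+i}$ retains multiplier $(n-k-i)$, yielding $n\sum_{j'=k}^{k+i-1}c_{j'}+(n-k-i)c_{k+i}\le k$; and at $j=n$ the final boundary term $(n-n)c_n$ vanishes, giving $n\sum_{j'=k}^{n-1}c_{j'}\le k$. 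These are precisely the three stated inequalities.

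The main obstacle is the second step, correctly collapsing the double sum $\sum_i r_i$, since one must track how each $c_j$ is reused across the $n$ rebates and how removing a bid reindexes every lower-ranked bid. Once the aggregated coefficient $(n-\ell)c_\ell+(\ell-1)c_{\ell-1}$ of each $b_\ell$ is pinned down, the remainder is the same partial-sum machinery already used in Claim~\ref{claim:iru-nonir-tfrm}, requiring care only at the two endpoints $j=k$ and $j=n$, where the respective boundary terms $c_{k-1}$ and $c_n$ drop out.
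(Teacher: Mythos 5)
Your proposal is correct and follows essentially the same route as the paper: rewrite $k\,b_k-\sum_i r_i$ as a linear form in the ordered bids (your aggregated coefficients $(n-\ell)c_\ell+(\ell-1)c_{\ell-1}$ are exactly the paper's $q_i$, up to sign) and apply the partial-sum lemma of Guo and Conitzer to obtain the three inequalities. The only difference is that you derive the coefficients by explicit bid-by-bid bookkeeping, which the paper states without derivation.
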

\begin{proof}
We prove this using \cite[Lemma 1]{Guo07} on Approx-IR$_\mathbf{M}$ constraints.
  The Approx-IR$_\mathbf{M}$ constraint requires that $r_1 + \ldots + r_n \leq kb_k$ where $r_i = c_0 + c_1 b_1 + \ldots + c_i b_{i+1} + \ldots + c_{n-1} b_n$. Since $c_i = 0 , i<k$, the constraint is further simplified as follows,
  $$q_k b_k + q_{k+1} b_{k+1} + \ldots + q_{n} b_n \geq  0  $$
  where, $q_k = k - (n-k)c_k$ and $q_i = -(i-1)c_{i-1} - (n-i) c_i $, for $i =\{ k+1, \ldots, n-1\}$
  and $q_n = - (n-1) c_{n-1}$. By \cite[Lemma 1]{Guo08}, the Approx-IR$_\mathbf{M}$ constraint is equivalent to having $\sum_{i=k}^{j} q_i \geq 0$ for all $j =\{k, \ldots, n\}$. This can be simplified to obtain the following:
    \begin{align*}
        q_{k} \geq 0 &\iff (n-k)c_k \leq k \\
        q_{k} + \ldots + q_{m+i} \geq 0 &\iff n\sum_{j=k}^{k+i-1}c_j + (n-k-i) c_{k+i} \leq k, \\
        &\quad i\in \{1, \ldots, n-k-1\}\\
        q_{k} + \ldots + q_{n} \geq 0 &\iff n \sum_{j=k}^{n-1}c_j \leq k
    \end{align*}
This proves the claim.
\end{proof}

\subsection{Proof of Theorem~\ref{thm:nonir-tfrm}}

\begin{theorm}
For any $n$ and $k$ such that $n\ge k+2$, the \rtfrm mechanism is unique. The fraction redistributed to the top-k users in the worst-case is given by:
    $$f^{*} = \frac{k}{n}$$
    In \rtfrm, the rebate function is characterized by the following:
    $$c_k^{*} = \frac{k}{n},\  \ c_i^{*} = 0 \ \forall{i\neq k}$$
\end{theorm}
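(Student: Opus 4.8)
The plan is to solve the linear program in Figure~\ref{fig::nonir-tfrm} directly, exploiting the fact that by Claim~\ref{claim:12k-nonir-tfrm} the only free variables are $c_k, c_{k+1}, \ldots, c_{n-1}$. The goal is to maximize $f$ subject to the worst-case constraints $\sum_{j=k}^{i} c_j \geq f$ for all $i \in \{k, \ldots, n-1\}$ (from Claim~\ref{claim:iru-nonir-tfrm}) and the Approx-IR$_\mathbf{M}$ constraints from Claim~\ref{claim:irmwc-nonir-tfrm}. First I would observe that the binding objective constraint is the tightest one, i.e., $f \leq \sum_{j=k}^{i} c_j$ for every $i$; since we want $f$ as large as possible, the key is to push the partial sums up while respecting the upper-bound constraints coming from Approx-IR$_\mathbf{M}$.

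The cleanest route is to use the last Approx-IR$_\mathbf{M}$ constraint $n\sum_{j=k}^{n-1} c_j \leq k$, which gives $\sum_{j=k}^{n-1} c_j \leq k/n$. Combined with the worst-case constraint at $i = n-1$, namely $\sum_{j=k}^{n-1} c_j \geq f$, this immediately yields the upper bound $f \leq k/n$. So the plan is: establish $f \leq k/n$ as a ceiling, then exhibit a feasible point that attains it. The natural candidate is $c_k = k/n$ and $c_i = 0$ for all $i \neq k$. I would verify this satisfies every constraint: the worst-case constraints become $\sum_{j=k}^{i} c_j = k/n \geq f = k/n$ for all $i$ (tight and satisfied); the first Approx-IR$_\mathbf{M}$ constraint $(n-k)c_k \leq k$ becomes $(n-k)\frac{k}{n} \leq k$, which holds since $n-k < n$; the last one $n \cdot \frac{k}{n} = k \leq k$ holds with equality; and the intermediate constraints $n\sum_{j=k}^{k+i-1} c_j + (n-k-i)c_{k+i} \leq k$ reduce to $n \cdot \frac{k}{n} = k \leq k$ for $i \geq 1$ (since $c_{k+i} = 0$), all satisfied.

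For uniqueness, I would argue that any optimal solution must have $f = k/n$ and then show the constants are forced. From $f = k/n$ and the $i = n-1$ worst-case constraint together with the last Approx-IR$_\mathbf{M}$ constraint, we get $\sum_{j=k}^{n-1} c_j = k/n$ exactly (both inequalities must be tight). Then the worst-case constraint at $i = k$ gives $c_k \geq f = k/n$, while combined with nonnegativity of the remaining partial sums and the total summing to $k/n$, this forces $c_k = k/n$ and $c_{k+1} = \cdots = c_{n-1} = 0$. I would need to check carefully that the constraints do indeed pin down each $c_j$ rather than merely their sum; the monotonicity of the partial sums $\sum_{j=k}^{i} c_j$ (each $\geq k/n$ yet summing at the end to exactly $k/n$) should force $c_k = k/n$ and every subsequent increment to be zero.

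The main obstacle I anticipate is the uniqueness argument rather than optimality. Optimality via the two-sided squeeze ($f \leq k/n$ from Approx-IR$_\mathbf{M}$ and achievability by the explicit point) is routine. The delicate part is confirming that no alternative distribution of mass among $c_{k+1}, \ldots, c_{n-1}$ can also achieve $f = k/n$: I must rule out, e.g., shifting weight to later coefficients. The resolution hinges on the worst-case constraint forcing $c_k \geq k/n$ individually, after which the total budget $\sum_{j=k}^{n-1} c_j = k/n$ leaves no room for any positive $c_{k+i}$, giving exactness. I would present this as the crux of the argument and verify the intermediate Approx-IR$_\mathbf{M}$ constraints do not permit any slack exploitation.
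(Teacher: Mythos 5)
Your overall strategy coincides with the paper's: exhibit the candidate $c_k^{*}=k/n$, $c_i^{*}=0$ for $i\neq k$ as feasible, then argue that no other optimal point exists. Your explicit derivation of the ceiling $f\le k/n$ by combining $n\sum_{j=k}^{n-1}c_j\le k$ with the worst-case constraint $\sum_{j=k}^{n-1}c_j\ge f$ at $i=n-1$ is clean and is only implicit in the paper's write-up; the feasibility check is identical to the paper's.

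The gap is in the uniqueness step. You claim that $c_k\ge k/n$ (from the worst-case constraint at $i=k$) together with the budget $\sum_{j=k}^{n-1}c_j=k/n$ and ``nonnegativity of the remaining partial sums'' forces $c_k=k/n$ and $c_{k+1}=\cdots=c_{n-1}=0$. That inference is false as stated: the point $c_k=k/n+\epsilon$, $c_{k+1}=-\epsilon$, $c_j=0$ otherwise has every partial sum $\sum_{j=k}^{i}c_j\ge k/n$ and total exactly $k/n$, so neither the lower-bound constraints nor the budget rule it out --- the danger is a \emph{negative} later coefficient compensating an inflated $c_k$, not a positive one, so ``no room for any positive $c_{k+i}$'' is not the right mechanism. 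What actually kills this deviation is the intermediate Approx-IR$_\mathbf{M}$ constraint for $i=1$: $nc_k+(n-k-1)c_{k+1}\le k$ evaluates to $k+(k+1)\epsilon\le k$, forcing $\epsilon\le 0$. You do flag that the intermediate constraints need checking, but the closing argument you sketch does not use them where they are indispensable. The paper handles this by passing to the partial sums $x_j=\sum_{i=k}^{j}c_i$, rewriting the intermediate constraints as $(k+i)x_{k+i-1}+(n-k-i)x_{k+i}\le k$, and running an induction that squeezes each $x_j$ between the lower bound $k/n$ and an upper bound $k/n$ obtained from these inequalities. You need that induction, or an equivalent explicit use of the intermediate Approx-IR$_\mathbf{M}$ constraints, to complete the uniqueness claim.
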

\begin{proof}
    We first show that $c_k^{*} = \frac{k}{n}$, $c_i^{*}=0, \forall i \neq k$ is a feasible solution to the LP given in Figure \ref{fig::nonir-tfrm}. 

    Note that, $\sum_{j=k}^i c_k^{*} = \frac{k}{n} \geq f^* = \frac{k}{n}, i\in\{k, \ldots, n-1\} $. Further $(n-k)c_k^{*} = (n-k)\frac{k}{n} < k$ as $n-k < n$. Observe that the third constraint $n \sum_{j=k}^{k+i-1} c_j^{*} + (n-k-i) c_{k+i}^{*} = nc_k^{*} = n\frac{k}{n}\leq k, i\in\{1, \ldots, n-k-1\}$ is also satisfied. Finally, $nc_k^{*} \leq k$ hence, $c^*$ and $f^*$ is a feasible solution to the LP.

    Now we show that if there exists any solution $\hat{c}, \hat{f}$ that satisfies the constraints of the LP, then $\hat{c} = c^{*}$ and $\hat{f} = f^*$. First, let $x_j = \sum_{i=k}^j c_i$ for $j= k,\ldots, n-1$. The LP constraints can be re-written as the following,
    \begin{align*}
        x_j &\geq f,  \ j=\{k, \ldots, n-1\} \\
        (n-k)x_k &\leq k \\
        (k+i) x_{k+i-1} + (n-k-i) x_{k+i} &\leq k \ i = \{1,  \ldots, n-k-1\}\\
        nx_{n-1} &\leq k 
    \end{align*}
    We know that $x_k^* = f^*$, from first constraint we have $\hat{x}_k \geq \hat{f} \geq f^* = x_k^*$, i.e., $\hat{x}_j \geq x_j^*$ for $j=\{k, \ldots, n-1\}$. From second constraint we have that, $x_k^* \leq $ From the third constraint, we obtain, for i = 1, $(k+1) x_k^* + (n-k-1) x_{k+1}^* = k$ and for any $\hat{x}$, $(n-k-1) \hat{x}_{k+1} \leq k - (k+1) \hat{x}_{k} \leq k - (k+1) x_k^* = (n-k-1) x_{k+1}^*$, i.e., $\hat{x}_{k+1} \leq x_{k+1}^*$. Similarly we can obtain $\hat{x}_k \leq x_k^*$ further using $i={2, \ldots, n-k-1}$ and the fourth constraint we obtain $\hat{x_j} \leq x_j^{*}$, for $j =k+2, \ldots, n-1$. From the first and fourth constraint we can conclude that $\hat{x}_j = x_j^*$ for $j = k, \ldots, n-1$. This completes the proof of the theorem.
\end{proof}

\subsection{Proof of Theorem~\ref{thm:rri-nonir-tfrm}}

\begin{theorm}
  Consider $n$ included transactions with the set $C$ as confirmed transactions such that $|C|=k$ with the remaining $n-k$ as price-setting transactions. Irrespective of any miner manipulation, \rtfrm\ ensures strictly positive RRI or $ \hat{e}_{\textsf{wc}} =  c_k^*$.
\end{theorm}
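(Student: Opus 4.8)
The plan is to evaluate the RRI ratio of Definition~\ref{def:rri} directly under an arbitrary miner manipulation $\hat{\mathbf{b}}$, using the closed form of the \rtfrm\ rebate from Theorem~\ref{thm:nonir-tfrm}. Since that theorem gives $c_k^*=k/n$ and $c_i^*=0$ for all $i\neq k$, Theorem~\ref{thm:order} implies that, in the post-manipulation sorted block $\hat{b}_1\ge\cdots\ge\hat{b}_n$, each included user's rebate equals $k/n$ times the $k$-th largest of the remaining $n-1$ bids. First I would record the two cases this produces: a genuine user occupying a confirmed position ($\le k$) recovers $(k/n)\,\hat{b}_{k+1}$, because deleting its bid promotes $\hat{b}_{k+1}$ into the $k$-th slot; whereas a genuine user in a price-setting position ($>k$) recovers $(k/n)\,\hat{b}_{k}$, since deleting a bid outside the top $k$ leaves $\hat{b}_k$ in place.

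Next I would form the ratio $\sum_{i\in S} r_i / p(\cdot;\hat{\mathbf{b}})$. Let $k_g$ be the number of genuine (non-impersonated) confirmed users and $p_g$ the number of genuine price-setting users; the fee actually paid by the genuine users is $p(\cdot;\hat{\mathbf{b}})=k_g\,\hat{b}_{k+1}$, as each genuine confirmed user pays the VCG price $\hat{b}_{k+1}$ and price-setters pay nothing. Substituting the per-user rebates gives
\[
\frac{\sum_{i\in S} r_i}{p(\cdot;\hat{\mathbf{b}})}=\frac{k_g\,(k/n)\,\hat{b}_{k+1}+p_g\,(k/n)\,\hat{b}_{k}}{k_g\,\hat{b}_{k+1}}=\frac{k}{n}+\frac{k}{n}\cdot\frac{p_g\,\hat{b}_{k}}{k_g\,\hat{b}_{k+1}}.
\]
The crucial observation is that the impersonating bid $\hat{b}_{k+1}$ cancels, so each genuine confirmed user always recovers exactly a $k/n$ fraction of its own payment, no matter how the miner chooses the fake bids. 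Since $\hat{b}_k\ge\hat{b}_{k+1}\ge0$ and $k_g,p_g\ge0$, the second term is nonnegative and the ratio is always at least $k/n$.

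To show the bound is tight, I would exhibit the extremal manipulation: the miner impersonates every price-setting transaction (so $p_g=0$) while leaving at least one genuine transaction confirmed (so $k_g\ge1$ and the denominator is nonzero). The second term then vanishes and the ratio is exactly $k/n$. Taking the infimum over all admissible $\hat{\mathbf{b}}$ gives $\hat{e}_{\textsf{wc}}=c_k^*=k/n$, which is strictly positive.

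I expect the main obstacle to be ruling out \emph{every} alternative manipulation rather than just impersonation of the price-setting slots --- for instance, inserting high fake bids that push a genuine confirmed user down into a price-setting slot, or impersonating confirmed slots outright (as in the example following the theorem). A short displacement argument settles this: moving a genuine user from a confirmed to a price-setting position deletes $(k/n)\hat{b}_{k+1}$ from the numerator and $\hat{b}_{k+1}$ from the denominator, but adds $(k/n)\hat{b}_k\ge(k/n)\hat{b}_{k+1}$ back to the numerator, so it can only raise the ratio. Hence no manipulation drives the ratio below $k/n$, and the infimum is attained precisely at $k/n$.
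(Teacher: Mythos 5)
Your proposal is correct and follows the same route as the paper: substitute the closed-form rebate $c_k^{*}=k/n$ (with $c_i^{*}=0$ for $i\neq k$) into the RRI ratio and observe that $\hat{b}_{k+1}$ cancels, so each genuine confirmed user recovers exactly a $k/n$ fraction of its payment regardless of the fake bids. If anything, your write-up is more complete than the paper's own proof, which only evaluates the ratio at the canonical manipulation $S=\{1,\ldots,k\}$ (all price-setting slots impersonated) and implicitly takes that to be the infimum, whereas you explicitly check that every other split into genuine confirmed and genuine price-setting slots yields a ratio of at least $k/n$, so the infimum is attained there.
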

\begin{proof}
The VCG payments to the miner and rebates both depend on the unconfirmed transaction $b_{k+1}$. Manipulating $b_{k+1}$ would change both the payments and refund in a way that the fraction of redistribution remains constant.
    From Definition \ref{def:rri}, we know that,
    \begin{align*}
        \hat{e}_{\textsf{wc}} &=  inf_{\hat{\mathbf{b}}} \frac{\sum_{i\in S} r_i}{p(\hat{\mathbf{b}})} \\
          &=  inf_{\hat{\mathbf{b}}} \frac{k(c_k \hat{b}_{k+1})}{k \hat{b}_{k+1}} \\
         & \quad \left(\mbox{As } S= \{1,\ldots,k\} \ \mbox{and}\  p(\hat{\mathbf{b}}) \mbox{ is VCG}\right)  \\
         & \quad \left(r_i \mbox{ given by Theorem \ref{thm:nonir-tfrm}}\right)\\  \\
         &= c_k = \frac{k}{n}
    \end{align*}
    
    This proves the theorem.
\end{proof}

\section{Proofs for results from Section 7~\label{app::sec7}}

\subsection{Proof of Theorem \ref{thm:rrtfrm}}
\label{app:thm_rrtfrm}

\begin{theorm}
    For any $n$ and $k$ such that $n\geq k+2$ and any bid profile $\mathbf{b} = (b_1, \ldots, b_n)$, and probability $\alpha \in (0,1)$ \rrtfrm\ has an expected redistribution fraction $f^* = \alpha \cdot \frac{k}{n}$. Further it satisfies $AE$, $RUIC$, $IR_u$, and is $IR_\mathbf{M}$ when $\alpha \leq \frac{n}{k+ (n-k)b_k/b_{k+1}} $. 
\end{theorm}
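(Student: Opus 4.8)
The plan is to reuse the explicit solution of Theorem~\ref{thm:nonir-tfrm} and then dispatch each claimed property separately, treating the rebate coin flip as an independent randomization uncorrelated with every bid. First I would substitute the optimal coefficients $c_k^* = k/n$ and $c_i^* = 0$ for $i \neq k$ into the linear rebate form of Theorem~\ref{thm:order} to obtain closed-form rebates. For a confirmed user $i \leq k$, deleting $b_i$ from the top $k$ leaves $b_{k+1}$ as the $k$-th highest of the remaining bids, so $r_i = (k/n)\,b_{k+1}$; for an included-but-unconfirmed user $j \geq k+1$, all of $b_1,\ldots,b_k$ survive the deletion, so the $k$-th highest remaining bid is $b_k$ and $r_j = (k/n)\,b_k$. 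Summing recovers Equation~\ref{eq:rebatenonirtfrm}. Since each rebate is paid only with probability $\alpha$, the expected amount redistributed to the top-$k$ confirmed users is $\alpha\cdot k\cdot(k/n)\,b_{k+1}$, yielding the expected redistribution fraction $f^{*}=\alpha\cdot k/n$.

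Next I would handle the incentive properties. AE is immediate because \rrtfrm\ uses the same confirmation rule as \rtfrm---the top $k$ bids are confirmed---and the randomness perturbs only payments, never the allocation. For $IR_u$ I would compute expected utilities directly: a confirmed user's expected utility is $\theta_i - b_{k+1} + \alpha r_i \geq 0$, since a truthful confirmed bid satisfies $\theta_i = b_i \geq b_{k+1}$ and $r_i \geq 0$; an unconfirmed user's expected utility is $\alpha r_j \geq 0$, where non-negativity of all rebates is guaranteed by Claim~\ref{claim:ir-nonir-tfrm}. For RUIC, the key observation is that by Theorem~\ref{thm:order} each $r_i$ depends only on the bids of others, and the coin determining whether the rebate is paid is (by the trusted on-chain randomness) independent of all bids; hence the expected payment of an included user is the VCG price $b_{k+1}$ minus a bid-independent term, so truthful reporting stays weakly dominant in expectation among the included users.

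The quantitative heart is the $IR_\mathbf{M}$ bound, which I would obtain by computing the honest miner's expected net revenue. Taking expectations over the rebate coin, a confirmed user contributes $b_{k+1} - \alpha r_i$ and an unconfirmed user contributes $-\alpha r_j$, so the total expected revenue is $k\,b_{k+1} - \alpha\sum_{i=1}^n r_i$. Substituting the closed-form rebate sum and imposing non-negativity gives $n\,b_{k+1} \geq \alpha\bigl[k\,b_{k+1} + (n-k)\,b_k\bigr]$, which rearranges exactly to $\alpha \leq \overline{\alpha} = n/\bigl(k + (n-k)\,b_k/b_{k+1}\bigr)$.

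The step I expect to require the most care is RUIC under randomization: one must rule out that an included user can profitably shift the confirmation boundary or alter which bid plays the role of $b_{k+1}$ or $b_k$ in its own rebate. Because the rebate a user receives is a function solely of others' bids and the coin is independent of every bid and uninfluenceable by any user, the expected-utility calculation collapses to the deterministic Groves/VCG argument already invoked for \rtfrm's RUIC, so the property transfers; nonetheless, verifying that the expectation over $\alpha$ is taken correctly---and that no user can bias $\alpha$---is the part that warrants the closest scrutiny.
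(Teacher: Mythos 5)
Your proof is correct and, for the bulk of the theorem, follows the paper's own route: you derive the closed-form rebates $r_i = \frac{k}{n}b_{k+1}$ for confirmed users and $r_j = \frac{k}{n}b_k$ for price-setting users from Theorem~\ref{thm:nonir-tfrm}, read off the expected redistribution fraction $\alpha\cdot\frac{k}{n}$, verify AE and IR$_u$ directly, and obtain the IR$_\mathbf{M}$ threshold by exactly the computation the paper performs, namely $kb_{k+1}\geq \alpha\frac{k}{n}\left[kb_{k+1}+(n-k)b_k\right]$. The one place you genuinely diverge is RUIC. The paper proves it by exhaustive case analysis: for a confirmed and an unconfirmed included user separately, it considers overbidding and underbidding, splits on whether the deviation crosses the confirmation boundary, and compares expected utilities case by case. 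You instead invoke the Groves decomposition: an included user's expected utility is its VCG utility plus $\alpha\,g(\mathbf{b}_{I\setminus i})$, the second term is independent of $b_i$ by Theorem~\ref{thm:order}, and the coin is uninfluenceable, so truthfulness among included users is inherited from VCG. Your route is shorter and isolates the only real content (bid-independence of the rebate plus trusted randomness); the paper's case analysis instead makes explicit what a deviator earns in each scenario, including the boundary-crossing deviations you correctly flag as the delicate step -- which your argument handles implicitly rather than by enumeration. Both are sound.
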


\begin{proof}
\rrtfrm\ is randomized \rtfrm, where randomness is introduced due to the $\alpha$ parameter. 

\smallskip

\noindent\rrtfrm\ is $AE$ as the $k$ highest bids are allotted the slots according to the confirmation rule (Figure \ref{fig:RRTFRM_framework}). 

\smallskip

\noindent\rrtfrm\ is $IR_u$ as utility is non-negative for each included transaction. For the confirmed transaction $i \in \{1, \ldots, k\}$, utility is 
\begin{equation*}
        u_i = 
        \begin{cases}
           \theta_i - \left(\frac{n-k}{n}\right)b_{k+1} \geq b_i - \left(\frac{n-k}{n}\right)b_{k+1} \geq 0 \quad \mbox{w.p.}\ \  \alpha \\
            \theta_i - b_{k+1} \geq b_i - b_{k+1} \geq 0 \quad \mbox{w.p.}\  \ (1-\alpha)
        \end{cases}
    \end{equation*}
 as $\theta_i \geq b_i$ and $b_i \geq b_{k+1}$. Similary for the included but non-confirmed transaction, $j \in \{k+1, \ldots, n\}$, the utility is $u_j = \frac{k}{n} b_k \geq 0$ w.p. $\alpha$ or $u_j = 0$.

\smallskip
\noindent\rrtfrm\ is $RUIC$-in-expectation. For any confirmed transaction $i \in \{1, \ldots, k\}$ expected utility is $ u_i = \theta_i - \left(\frac{n-\alpha k}{n}\right) b_{k+1}$. The user may submit a dishonest bid $b_i*'$ s.t., i) $b'_i > \theta_i$ or ii) $b'_i < \theta_i$.

\noindent i) When $b'_i > \theta_i$, the bid is still confirmed and the expected utility is unchanged.

\noindent ii) When $b'_i < \theta_i$, s.t. $b'_i > b_{k+1}$, the transaction remains confirmed and the expected utility is unchanged. Thus, consider $b'_i < b_{k+1}$, then the transaction is not confirmed and now $b_k = b_{k+1}$ therefore the expected utility to $i$ is $\alpha \frac{k}{n} b_{k+1} \leq u_i$. Hence no confirmed user gains in expected upon underbidding.

For any transaction i.e., included but not confirmed, $j \in \{k+1, \ldots, n\}$, the expected utility is $u_i = \alpha \frac{k}{n}b_k $ and we have the same two cases as above,

\noindent i) When $b'_j > \theta_j$, s.t., $b'_j < b_k$ the transaction remains unconfirmed, and the expected utility is unchanged. Thus, consider $b'_j \geq b_k$, the transaction gets confirmed and now, $b_{k+1} = b_k \geq \theta_j$. Thus the expected utility of $j$ is given by $\theta_j - b_k + \alpha \frac{k}{n} b_{k} \leq u_j $. Hence the user does not gain in expectation upon overbidding.

\noindent ii) When $b'_j < \theta$, the user remains non-confirmed, and the maximum expected utility remains unchanged.

\rrtfrm\ satisfies $AE$, $IR_u$ and $RUIC$-in-expectation.

\smallskip

\noindent\rrtfrm\ is $IR_\mathbf{M}$-in-expectation. The payment obtained by an honest miner is given by $kb_{k+1}$. The expected rebate paid by the honest miner is given by, $\alpha \left[k \frac{k}{n} b_{k+1} + (n-k) \frac{k}{n} b_k\right] $. If $\alpha = 1$, the refund is equal to the refund in \rtfrm\ given by Equation \ref{eq:rebatenonirtfrm}. In order to ensure $IR_\mathbf{M}$ for the honest miner, the following must be true,

    \begin{align*}
        kb_{k+1} &\geq \alpha \left[k \frac{k}{n} b_{k+1} + (n-k) \frac{k}{n} b_k\right] \\
        b_{k+1} &\geq \alpha \left[\frac{k}{n} b_{k+1} + \frac{(n-k)}{n}b_k \right]\\
        \alpha &\leq \frac{nb_{k+1}}{ k b_{k+1} + (n-k)b_k}
    \end{align*}
Therefore \rrtfrm\ satisfies $IR_\mathbf{M}$ when $\alpha (b) =  \frac{nb_{k+1}}{ k b_{k+1} + (n-k)b_k} $ and total rebate is $k b_{k+1}$. 
\end{proof}

\subsection{Proof of Theorem~\ref{thm:rri-rr-tfrm}}

\begin{theorm}
  Consider $n$ included transactions with the set $C$ as confirmed transactions such that $|C|=k$ with the remaining $n-k$ as price-setting transactions. With miner manipulation \rrtfrm\ ensures: expected RRI or $ \mathbf{E}[\hat{e}_{\textsf{wc}}] =  \alpha \frac{k}{n}$.
\end{theorm}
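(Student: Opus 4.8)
The plan is to reduce the claim to the deterministic analysis of \rtfrm\ in Theorem~\ref{thm:rri-nonir-tfrm}, exploiting two facts: (i) the on-chain randomness that decides whether a user is paid a rebate is trusted and independent of the miner, so the miner cannot condition its manipulation $\hat{\mathbf{b}}$ on the realized coin flips; and (ii) the rebate magnitudes and the collected payment are exactly those of \rtfrm, the sole difference being that each rebate is now disbursed with probability $\alpha$. Consequently the randomization contributes a single multiplicative factor $\alpha$ to the redistributed amount, which I would factor out of the worst-case ratio.

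First I would fix an arbitrary miner manipulation $\hat{\mathbf{b}}$ and let $S$ denote the set of genuine (non-impersonated) users, as in Definition~\ref{def:rri}. The gross payment collected by the miner is the VCG payment $p(\cdot;\hat{\mathbf{b}}) = k\,\hat{b}_{k+1}$, which depends only on $\hat{\mathbf{b}}$ and not on the randomization; hence, once $\hat{\mathbf{b}}$ is fixed, the denominator of the RRI is deterministic. For each genuine user $i \in S$, \rrtfrm\ pays the \rtfrm\ rebate $r_i = c_k^{*}\,\hat{b}_{k+1} = \tfrac{k}{n}\hat{b}_{k+1}$ (Theorem~\ref{thm:nonir-tfrm}) with probability $\alpha$ and zero otherwise. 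By linearity of expectation, the expected amount redistributed to $S$ equals $\alpha \sum_{i\in S} r_i$, i.e.\ exactly $\alpha$ times the quantity appearing in the \rtfrm\ analysis.

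Next I would substitute these two observations into the RRI:
\begin{equation*}
\mathbf{E}_\alpha[\hat{e}_{\textsf{wc}}] = \inf_{\hat{\mathbf{b}}:\,p(\cdot;\hat{\mathbf{b}})\neq 0} \frac{\mathbf{E}_\alpha\!\left[\sum_{i\in S} r_i\right]}{p(\cdot;\hat{\mathbf{b}})} = \alpha \cdot \inf_{\hat{\mathbf{b}}:\,p(\cdot;\hat{\mathbf{b}})\neq 0} \frac{\sum_{i\in S} r_i}{p(\cdot;\hat{\mathbf{b}})} = \alpha\cdot\frac{k}{n},
\end{equation*}
where the final equality is precisely Theorem~\ref{thm:rri-nonir-tfrm}. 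Pulling the constant $\alpha \ge 0$ outside the infimum is legitimate because it scales every feasible ratio uniformly and the denominator is nonnegative.

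The main obstacle is justifying that the worst-case manipulation for \rrtfrm\ coincides with that for \rtfrm, so that the infimum over $\hat{\mathbf{b}}$ is unaffected and the factor $\alpha$ genuinely decouples. This rests on the trusted-randomness assumption: since the miner can neither observe nor bias the coin flips before committing to $\hat{\mathbf{b}}$, its feasible set of manipulations and the resulting expected payoffs are identical to the deterministic setting up to the uniform factor $\alpha$. Once this independence is in place, the remaining steps are a direct application of Theorem~\ref{thm:rri-nonir-tfrm} together with linearity of expectation.
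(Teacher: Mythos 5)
Your proposal is correct and follows essentially the same route as the paper's (much terser) proof: reduce to Theorem~\ref{thm:rri-nonir-tfrm} by observing that the trusted randomization is independent of the miner's manipulation, so by linearity of expectation the redistributed amount is scaled by a uniform factor $\alpha$ while the collected payment is unchanged. Your write-up simply makes explicit the steps the paper leaves implicit (fixing $\hat{\mathbf{b}}$, noting the denominator is deterministic, and pulling $\alpha$ out of the infimum).
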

\begin{proof}
    Similar to Theorem \ref{thm:rri-nonir-tfrm}, the fraction of redistribution remains constant. For every true user (not fake), the $\alpha k/n$ fraction of the payment is returned back as the rebate in expectation.
\end{proof}


\end{document}